\theoremstyle{plain}
\newtheorem{proposition}{Proposition}
\newtheorem{lemma}{Lemma}
\newtheorem{theorem}{Theorem}
\newtheorem{remark}{Remark}
\def\bmg{{\bm g}}
\def\bmi{{\bm i}}
\def\bmj{{\bm j}}
\newcounter{mnotecount}
\newcommand{\mnotex}[1]
{\protect{\stepcounter{mnotecount}}$^{\mbox{\footnotesize $\bullet$\themnotecount}}$ 
\marginpar{
\raggedright\tiny\em
$\!\!\!\!\!\!\,\bullet$\themnotecount: #1} }
\begin{document}

\title{\textbf{ Killing boundary data for anti-de Sitter-like spacetimes
}}

\author[,1]{Diego A. Carranza  \footnote{E-mail address:{\tt d.a.carranzaortiz@qmul.ac.uk}}}
\author[,1]{Juan A. Valiente Kroon \footnote{E-mail address:{\tt j.a.valiente-kroon@qmul.ac.uk}}}
\affil[1]{School of Mathematical Sciences, Queen Mary University of London,
Mile End Road, London E1 4NS, United Kingdom.}

\maketitle

\begin{abstract}
Given an initial-boundary value problem for an anti-de Sitter-like
spacetime, we analyse conditions on the conformal boundary ensuring
the existence of Killing vectors in the arising spacetime. This
analysis makes use of a system of conformal wave equations describing
the propagation of the Killing equation first considered by Paetz. We
identify an obstruction tensor constructed from Killing vector
candidate and the Cotton tensor of the conformal boundary whose
vanishing is a necessary condition for the existence of Killing
vectors in the spacetime. This obstruction tensor vanishes if the
conformal boundary is conformally flat.
\end{abstract}

\section{Introduction}

Anti-de Sitter-like spacetimes are solutions to the Einstein field
equations with negative Cosmological constant having a global
structure similar to that of the anti-de Sitter spacetime. In particular,
they can be conformally extended in such a way that the resulting
conformal boundary is a timelike hypersurface of the conformal
extension. Members of this class of solutions to the Einstein field
equations constitute prime examples of spacetimes
which are not \emph{globally hyperbolic}. Accordingly, initial data is
not enough to reconstruct one of these solutions to the Einstein field
equations ---one also needs to prescribe some suitable data at the
conformal boundary. The construction of anti-de Sitter spacetimes by means of a
initial-boundary value problem has been analysed in \cite{Fri95} where
a large family of \emph{maximally dissipative boundary conditions}
involving incoming and outgoing components of the Weyl tensor have
been identified. In this respect,  anti-de Sitter spacetimes provide a convenient
setting to study initial-boundary value problems for the Einstein
equations as the conformal boundary is a hypersurface with a rich
structure ---despite the use of the conformal Einstein field
equations, the formulation of the initial-boundary value problem for
anti-de Sitter-like spacetimes as given in \cite{Fri95} is
considerably simpler than the analysis of the \emph{general}
initial-boundary value problem for the Einstein field equations as
given in e.g. \cite{FriNag99}. In particular, the anti-de Sitter
construction allows to establish \emph{geometric uniqueness} while the
analysis in \cite{FriNag99} leaves unanswered this question ---see
\cite{Fri09} for a further discussion on this important issue. 

\medskip
The problem of encoding (continuous) symmetries of a spacetime at the
level of initial data is an important classical problem in Relativity
---see e.g. \cite{Mon75}. A modern presentation of this issue and the
related theory can be found in \cite{Chr91b,BeiChr97b}. The key
outcome of this theory is the so-called set of \emph{Killing initial data
equations}, a system of overdetermined equations
for a scalar field and a spatial vector on a spacelike
hypersurface ---corresponding, respectively, to the \emph{lapse} and \emph{shift}
with respect to the normal of the hypersurface of an hypothetical
Killing vector of the spacetime.  If these Killing equations admit a
solution, a so-called \emph{Killing initial data set (KID)}, then the
development of the initial data will have a Killing vector. The
theory of KID for the Cauchy problem for the Einstein field equations
can be also adapted to other settings like the (finite and asymptotic
) characteristic initial value problem \cite{ChrPae13a,Pae14} and,
more relevant for the purposes of the present article, to the
asymptotic initial value problem for the de Sitter-like spacetimes
\cite{Pae16} ---i.e. solutions to the vacuum Einstein field equations
with positive Cosmological constant.

\subsection*{Main results of the present article}
The purpose of the present article is to construct a theory of Killing
initial and boundary data in the setting of anti-de Sitter-like spacetimes. Given
the nature of the problem, we perform the analysis in a
\emph{conformal setting} ---that is, we work with a suitable
(\emph{unphysical}) conformal representation of the spacetime rather
than with the \emph{physical} spacetime itself. As these spacetimes
are not globally hyperbolic, in addition to
satisfying the KID equations on some initial hypersurface, one also
needs to prescribe some \emph{Killing boundary data (KBD)} to ensure
the existence of a Killing vector in the spacetime. The use of a
conformal setting allows to perform the analysis of the boundary
conditions for the Killing equations by means of local (differential
geometric) computations. The Killing boundary data restricts, in turn,
the structure of the conformal boundary. In addition, the Killing
initial and boundary data have to satisfy some compatibility
conditions at the corner where the initial hypersurface and the
conformal boundary meet.

\medskip
Our strategy to identify the Killing boundary data is to make use of a
system of conformal wave equations describing the propagation of the
Killing vector equation first discussed by Paetz in \cite{Pae16}
---the \emph{Killing equation conformal propagation system}, see 
\Cref{WEZQ}, equations \eqref{KillingEvolution1}-\eqref{KillingEvolution5}. If this
system has the trivial (vanishing) solution then a suitably
constructed Killing vector candidate is, in fact, a Killing vector of
the spacetime. Accordingly, one is naturally lead to consider an
initial-boundary value problem with both vanishing initial data and
Dirichlet boundary data for the Killing equation conformal propagation
system. While the vanishing initial data naturally leads to a
conformal version of the Killing initial data equations, the vanishing
Dirichlet boundary data give the Killing boundary data conditions
---see equations \eqref{PhiBdy}-\eqref{BabBdy}. A detailed formulation
of this result is given in \Cref{VanishingZQ}. The conditions
obtained by this approach are, in first instance, restrictions on
spacetime tensors. In a second step, we analyse the
interdependencies between these conditions and express them in terms
of objects which are intrinsic to the conformal boundary --- the
 \emph{reduced Killing boundary equations}, equations \eqref{EtaBdy}-\eqref{DerDerEtaBdy}. A key ingredient in
this analysis is given by the constraint equations, \eqref{CCCB1}-\eqref{CCCB10}, implied by the conformal
Einstein equations on the timelike conformal boundary.

The analysis of the reduced Killing boundary equations shows that a
necessary condition for the existence of a Killing vector in the
anti-de Sitter-like spacetime is the existence of a conformal Killing
vector in the conformal boundary ---see equation
\eqref{KillingVectorBdy} in the main text. In order to obtain further
insight into the content of the reduced Killing boundary equations we
analyse the conditions under which it is possible to ensure the
existence of such conformal Killing vector in terms of assumptions on
the conformal boundary and initial data at the corner. To this end, we
mimic the analysis on the spacetime and consider a conformal Killing
equation propagation system intrinsic to the boundary ---see the
equations in \Cref{WaveEquationsZeroQuantities}. This systems
allows the identification of an \emph{obstruction tensor}
$\mathcal{O}_{ab}$, constructed from an intrinsic conformal Killing
vector candidate and the Cotton tensor of the conformal boundary,
whose vanishing ensures the existence of the required intrinsic
conformal Killing vector ---see equation \eqref{Obstruction}. In particular, if the conformal boundary is
conformally flat (as in the case, for example, of the Kerr-anti de
Sitter spacetime) then the obstruction tensor vanishes. The existence
of the conformal Killing vector intrinsic to the conformal boundary is
formulated in \Cref{MainPropBoundary}. Finally, our main result
concerning the existence of Killing vectors in the development of an
initial-boundary value problem for the conformal Einstein equations is
given in \Cref{MainTheorem}

An important property of the analysis described in the previous
paragraphs which follows from working in an unphysical
(i.e. conformally rescaled) spacetime is that the boundary conditions
(both at a spacetime and intrinsic level) 
required for the existence of a Killing vector in the physical
spacetime are conformally invariant. Thus, the analysis is independent
of the conformal representation one is working with.

\medskip
An alternative approach to the analysis of continuous symmetries in
anti-de Sitter-like spacetimes has been started in
\cite{HolSha16,HolSha16b}. In this work, the objective is to encode
the existence of a Killing vector solely through conditions on the
conformal boundary ---in the spirit of the principle of
\emph{holography}. The required analysis, thus, leads to the study of
ill-posed initial value problems for wave equations which require the
use of methods of the \emph{theory of unique continuation}. Their
analysis requires imposing both Dirichlet and Neuman boundary
conditions on the conformal boundary while the discussion in the
present work requires, as already mentioned, only Dirichlet
conditions. The trade off is that our analysis also requires a
solution to the KID equation on a spacelike hypersurface and
compatibility conditions between the Killing initial and boundary
data.

\subsection*{Conventions}
Through out, the term \emph{spacetime} will be used to denote a
4-dimensional Lorentzian manifold which not necessarily satisfies the
Einstein field equations. Moreover,
$(\tilde{\mathcal{M}},\tilde{g}_{ab})$ will denote a vacuum spacetime
satisfying the Einstein equations with anti de Sitter-like
cosmological constant $\lambda$.  The signature of the metric in this
article will be $(-,+,+,+)$. It follows that $\lambda<0$. The
lowercase Latin letters $a,\, b,\, c, \ldots$ are used as abstract
spacetime tensor indices while the indices $i,\,j,\,k,\ldots$ are
abstract indices on the tensor bundle of hypersurfaces of
$\tilde{\mathcal{M}}$. The Greek letters $\mu, \, \nu, \,
\lambda,\ldots$ will be used as spacetime coordinate indices while
$\alpha,\,\beta,\,\gamma,\ldots$ will serve as spatial coordinate
indices.

Our conventions for the curvature are
\[
\nabla_c \nabla_d u^a -\nabla_d \nabla_c u^a = R^a{}_{bcd} u^b.
\]

\section{The metric conformal Einstein field equations}

Throughout all this work we will make use of the Einstein equations on a
conformal setting. Therefore, in this section the properties of 
this representation will be presented.

\medskip
Let $(\tilde{\mathcal{M}},\tilde{g}_{ab})$ a 4-dimensional spacetime 
satisfying the vacuum Einstein field equations
\begin{equation}
\tilde{R}_{ab} = \lambda \tilde{g}_{ab},
\label{EFE}
\end{equation}
where $\tilde{R}_{ab}$ is the Ricci tensor associated to the metric
$\tilde{g}_{ab}$ and $\lambda$ the so-called cosmological constant.
Now, consider a conformal embedding 
consider a spacetime $(\mathcal{M},g_{ab})$ which is related to
$(\tilde{\mathcal{M}},\tilde{g}_{ab})$ via a \emph{conformal embedding}
\[
\tilde{\mathcal{M}}\stackrel{\varphi}{\hookrightarrow}\mathcal{M},
\qquad \tilde{g}_{ab} \stackrel{\varphi}{\mapsto} g_{ab}\equiv
\Xi^2 \big( \varphi^{-1})^*\tilde{g}_{ab}, \qquad
\Xi|_{\varphi(\tilde{\mathcal{M}})}>0.
\]
Slightly abusing of the notation we
write
\begin{equation}
g_{ab} = \Xi^2 \tilde{g}_{ab},
\label{ConformalRescaling}
\end{equation}
where the \emph{conformal factor} $\Xi$ is a non-negative scalar function. The
set of points of $\mathcal{M}$ for which $\Xi$ vanishes will be called the
\emph{conformal boundary}. We use the notation $\mathscr{I}$ to denote the
parts of the conformal boundary which are an hypersurface of $\mathcal{M}$.

\subsection{Basic properties}

In what follows, let $\nabla_a$ denote the Levi-Civita connection of the metric
$g_{ab}$.  Let $R^a{}_{bcd}$, $R_{ab}$, $R$ and $C^a{}_{bcd}$ denote,
respectively, the corresponding Riemann tensor, Ricci tensor, Ricci scalar and
(conformally invariant) Weyl tensor. In a conformal context
it is customary to introduce \emph{Schouten tensor} $L_{ab}$,
defined as
\[
L_{ab} \equiv \frac{1}{2}\bigg(R_{ab} - \frac{1}{6} R g_{ab}\bigg).
\]
Moreover, it is useful to define the following quantities:
\[
s \equiv\frac{1}{4} \nabla^c\nabla_c \Xi + \frac{1}{24}R\Xi, \qquad d^a{}_{bcd} \equiv \Xi^{-1}C^a{}_{bcd},
\]

where the former is the so-called \emph{Friedrich scalar} and the latter is the
\emph{rescaled Weyl tensor}.

In terms of the objects defined above, and under a conformal transformation,
the Einstein equations \eqref{EFE} imply a system of differential equations
known as the \emph{metric vacuum conformal Einstein field equations}, given by:
\begin{subequations}
\begin{eqnarray}
&& \nabla_a \nabla_b \Xi =-\Xi L_{ab} + sg_{ab}, \label{CFE1}\\
&& \nabla_a s = -L_{ac} \nabla^c \Xi, \label{CFE2}\\
&& \nabla_a L_{bc} - \nabla_b L_{ac} = \nabla_e \Xi d^e{}_{cab}, \label{CFE3}\\
&& \nabla_e d^e{}_{cab} =0,  \label{CFE4}\\
&& \lambda  = 6 \Xi s - 3 \nabla_c \Xi \nabla^c \Xi. \label{CFE5}
\end{eqnarray}
\end{subequations}
A detailed derivation of this system for the general case of a non-zero matter
component can be found in \cite{CFEBook}.

\begin{remark}
{\em Expressions \eqref{CFE1}-\eqref{CFE4} are differential equations for the
fields $\Xi$, $s$, $L_{ab}$ and $d^a{}_{bcd}$, while equation \eqref{CFE5} will
be regarded as a constraint. As shown in Lemma 8.1 in \cite{CFEBook}, if
\eqref{CFE1} and \eqref{CFE2} are satisfied, \eqref{CFE5} will automatically do
so as long as it holds at a single point}.
\end{remark}

\medskip
By a solution to the metric conformal Einstein field equations it is understood a collection
\[
(g_{ab}, \Xi, s, L_{ab}, d^a{}_{bcd})
\]
satisfying equations \eqref{CFE1}-\eqref{CFE5}. If $\tilde{g}_{ab}$ is
a solution to the Einstein equations \eqref{EFE} and it is conformally
related to $g_{ab}$, then the latter is a solution to the conformal
Einstein field equations. The converse of this statement is given as
follows:

\begin{proposition}
Let $(g_{ab}, \Xi, s, L_{ab}, d^a{}_{bcd})$ denote a solution to the
metric conformal Einstein field equations \eqref{CFE1}-\eqref{CFE4} such that $\Xi\neq 0$ on an
open set $\mathcal{U}\subset \mathcal{M}$. If, in addition, equation
\eqref{CFE5} is satisfied at a point $p\in \mathcal{U}$, then the
metric 
\[
\tilde{g}_{ab} = \Xi^{-2} g_{ab}
\]
is a solution to the
Einstein field equations \eqref{EFE} on $\mathcal{U}$. 
\end{proposition}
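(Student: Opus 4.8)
The plan is to show that the physical metric $\tilde{g}_{ab}=\Xi^{-2}g_{ab}$, which is well defined on $\mathcal{U}$ precisely because $\Xi\neq 0$ there, satisfies $\tilde{R}_{ab}=\lambda\tilde{g}_{ab}$. I would first reformulate this in terms of the Schouten tensor $\tilde{L}_{ab}\equiv\tfrac{1}{2}(\tilde{R}_{ab}-\tfrac{1}{6}\tilde{R}\,\tilde{g}_{ab})$ of $\tilde{g}_{ab}$: taking the trace shows that $\tilde{R}_{ab}=\lambda\tilde{g}_{ab}$ is equivalent to the algebraic statement $\tilde{L}_{ab}=\tfrac{\lambda}{6}\tilde{g}_{ab}$ (the forward direction gives $\tilde{R}=4\lambda$ and hence $\tilde{L}_{ab}=\tfrac{\lambda}{6}\tilde{g}_{ab}$, and tracing the latter recovers $\tilde{R}=4\lambda$ and thus $\tilde{R}_{ab}=\lambda\tilde{g}_{ab}$). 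It therefore suffices to establish $\tilde{L}_{ab}=\tfrac{\lambda}{6}\tilde{g}_{ab}$ everywhere on $\mathcal{U}$.

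The first --- and decisive --- step is to propagate the constraint \eqref{CFE5}. Writing $\Lambda\equiv 6\Xi s-3\nabla_c\Xi\nabla^c\Xi$, I would differentiate and substitute the field equations: using \eqref{CFE2} for $\nabla_a s$ and \eqref{CFE1} for $\nabla_a\nabla_c\Xi$ in the term $(\nabla_a\nabla_c\Xi)\nabla^c\Xi$, one finds that the $s\,\nabla_a\Xi$ contributions and the $\Xi L_{ac}\nabla^c\Xi$ contributions cancel in pairs, leaving $\nabla_a\Lambda=0$. Hence $\Lambda$ is locally constant; assuming $\mathcal{U}$ connected (or else working on the connected component of $p$), the hypothesis that \eqref{CFE5} holds at $p$ forces $\Lambda=\lambda$ throughout $\mathcal{U}$. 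This is precisely the mechanism alluded to in the Remark above, reproducing Lemma 8.1 of \cite{CFEBook}. Note that only \eqref{CFE1} and \eqref{CFE2}, not \eqref{CFE3}--\eqref{CFE4}, are needed here, and that $\Xi\neq 0$ is not even required for this step.

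With the constraint now available on all of $\mathcal{U}$, the second step is purely algebraic. Since $L_{ab}$ is by definition the Schouten tensor of $g_{ab}$ and $\Xi\neq 0$, the standard conformal transformation law relates it to $\tilde{L}_{ab}$ by
\[
\tilde{L}_{ab}=L_{ab}+\Xi^{-1}\nabla_a\nabla_b\Xi-\tfrac{1}{2}\Xi^{-2}(\nabla_c\Xi\nabla^c\Xi)\,g_{ab},
\]
an identity I would verify by inserting $\Upsilon_a=-\Xi^{-1}\nabla_a\Xi$ into the general Schouten transformation formula. Substituting \eqref{CFE1} for $\nabla_a\nabla_b\Xi$ cancels the $L_{ab}$ terms, and using $g_{ab}=\Xi^2\tilde{g}_{ab}$ one obtains
\[
\tilde{L}_{ab}=\Big(\Xi s-\tfrac{1}{2}\nabla_c\Xi\nabla^c\Xi\Big)\tilde{g}_{ab}=\tfrac{1}{6}\Lambda\,\tilde{g}_{ab}=\tfrac{\lambda}{6}\tilde{g}_{ab},
\]
where the last equality uses $\Lambda=\lambda$ from the first step. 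By the trace computation above this is equivalent to $\tilde{R}_{ab}=\lambda\tilde{g}_{ab}$, as required.

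I expect the only genuinely non-routine point to be the constraint-propagation identity $\nabla_a\Lambda=0$: it is what upgrades the pointwise hypothesis at $p$ into a statement on all of $\mathcal{U}$, and it is the reason the single-point version of \eqref{CFE5} suffices. The remaining work is bookkeeping --- fixing the signs in the conformal transformation of the Schouten tensor and keeping track of which metric raises indices --- but this is mechanical rather than conceptual. One should also record the mild assumption that $\mathcal{U}$ (or the relevant component) is connected, without which $\Lambda$ need only be locally constant.
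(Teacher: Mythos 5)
Your proposal is correct, and it follows essentially the same route as the paper's proof, which is not given in the text but deferred to Proposition 8.1 of \cite{CFEBook}: there, too, the argument rests on propagating the constraint \eqref{CFE5} from the point $p$ (Lemma 8.1 of that reference, using only \eqref{CFE1} and \eqref{CFE2}) and then combining \eqref{CFE1} with the conformal transformation law of the Schouten tensor to obtain $\tilde{L}_{ab}=\tfrac{\lambda}{6}\tilde{g}_{ab}$, which is algebraically equivalent to \eqref{EFE}. Your observations that \eqref{CFE3}--\eqref{CFE4} play no role and that connectedness of $\mathcal{U}$ (or of the component containing $p$) is tacitly needed are both accurate; note also that your identification of $L_{ab}$ with the Schouten tensor of $g_{ab}$ is legitimate here precisely because the paper defines $L_{ab}$, $s$ and $d^a{}_{bcd}$ geometrically rather than treating them as independent unknowns.
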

A proof of this proposition is given in \cite{CFEBook} ---see
Proposition 8.1 in that reference.

\medskip
The causal character of the conformal boundary $\mathscr{I}$ is determined by
the sign of the Cosmological constant. As this will be of key importance in the
forthcoming sections, we make this more precise:

\begin{proposition}
Suppose that the Friedrich scalar $s$ is regular on $\mathscr{I}$. Then
$\mathscr{I}$ is a null, spacelike or timelike hypersurface of
$\mathcal{M}$, respectively, depending on whether $\lambda=0$,
$\lambda>0$ or $\lambda<0$. 
\end{proposition}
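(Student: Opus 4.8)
The plan is to read off the causal character of $\mathscr{I}$ from the norm of its conormal and to compute that norm directly from the constraint \eqref{CFE5}. By definition $\mathscr{I}$ is a component of the zero set $\{\Xi=0\}$ which is a hypersurface of $\mathcal{M}$, so its conormal is proportional to the gradient $\nabla_a \Xi$. With the signature $(-,+,+,+)$ a hypersurface is spacelike, null or timelike according to whether its conormal is timelike, null or spacelike, that is, according to whether $\nabla_c \Xi \nabla^c \Xi$ is negative, zero or positive. The whole statement therefore reduces to determining the sign of $\nabla_c \Xi \nabla^c \Xi$ on $\mathscr{I}$.

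First I would restrict the constraint \eqref{CFE5},
\[
\lambda = 6\Xi s - 3 \nabla_c \Xi \nabla^c \Xi,
\]
to the conformal boundary. Since $\Xi = 0$ on $\mathscr{I}$, the first term on the right-hand side is $6\Xi s$ evaluated as $\Xi \to 0$. This is exactly where the hypothesis enters: because $s$ is assumed regular, and hence bounded, on $\mathscr{I}$, the product $6\Xi s$ vanishes there. Consequently, evaluating \eqref{CFE5} on $\mathscr{I}$ gives the clean relation
\[
\nabla_c \Xi \nabla^c \Xi = -\frac{\lambda}{3} \qquad \text{on } \mathscr{I}.
\]

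The three cases then follow by inspecting the sign of $\lambda$: for $\lambda = 0$ the conormal is null and $\mathscr{I}$ is null; for $\lambda > 0$ one has $\nabla_c \Xi \nabla^c \Xi < 0$, so the conormal is timelike and $\mathscr{I}$ is spacelike; and for $\lambda < 0$ one has $\nabla_c \Xi \nabla^c \Xi > 0$, so the conormal is spacelike and $\mathscr{I}$ is timelike, which matches the statement. I do not anticipate any genuine difficulty here: the argument is essentially a one-line evaluation of the constraint, and the only point requiring care is the regularity assumption on $s$, without which the term $6\Xi s$ need not vanish (it would survive if $s$ diverged like $\Xi^{-1}$), so this is precisely the obstacle the hypothesis is designed to remove. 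As a bonus, for $\lambda \neq 0$ the identity forces $\nabla_a \Xi \neq 0$ on $\mathscr{I}$, confirming a posteriori that $\mathscr{I}$ is a non-degenerate hypersurface carrying a well-defined causal character.
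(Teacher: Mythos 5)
Your proof is correct and follows exactly the paper's (one-line) argument: evaluate the constraint \eqref{CFE5} on $\mathscr{I}$, use the regularity of $s$ to discard $6\Xi s$, and read off the causal character of the normal $\nabla_a\Xi$ from the sign of $\lambda$. Your write-up simply makes explicit the details the paper leaves implicit.
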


\begin{proof}
This result follows directly from evaluating equation \eqref{CFE5}
at $\mathscr{I}$ and recalling that $\nabla_a\Xi$ is normal to this hypersurface.
\end{proof}

\subsection{Wave equations for the conformal fields}

In \cite{Pae15} it has been shown how the conformal Einstein field equations
\eqref{CFE1}-\eqref{CFE4} imply a system of \emph{geometric} wave equations for
the components of the fields $(\Xi,\,s,\,L_{ab},\,d^a{}_{bcd})$.
Such system takes the form:

\begin{proposition}
Any solution $(\Xi,\,s,\,L_{ab},\,d^a{}_{bcd})$ to the vacuum conformal Einstein field
equations \eqref{CFE1}-\eqref{CFE4} satisfies the equations
\begin{subequations}
\begin{eqnarray}
&&\square \Xi = 4s -\frac{1}{6}\Xi R, \label{WaveCFE1}\\
&&\square s=  \Xi L_{ab} L^{ab}- \frac{1}{6} s R -  \frac{1}{6} \nabla_{a}R \nabla^{a}\Xi, \label{WaveCFE2}\\
&& \square L_{ab}= 4 L_{a}{}^{c} L_{bc}- g_{ab} L_{cd} L^{cd} - 2 \Xi d_{acbd} L^{cd} 
+ \frac{1}{6} \nabla_{a}\nabla_{b}R, \label{WaveCFE3}\\
&& \square d_{abcd} = 2 \Xi d_{a}{}^{e}{}_{d}{}^{f} d_{becf} 
- 2 \Xi d_{a}{}^{e}{}_{c}{}^{f} d_{bedf} - 2 \Xi d_{ab}{}^{ef} 
d_{cedf} + \frac{1}{2} d_{abcd} R.\label{WaveCFE4}
\end{eqnarray}
\end{subequations}
\end{proposition}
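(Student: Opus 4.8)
The plan is to obtain each wave equation by applying a suitable contraction or divergence to the first-order equations \eqref{CFE1}--\eqref{CFE4} and then commuting covariant derivatives, trading the resulting curvature terms for the conformal fields themselves. The two basic algebraic inputs I would use throughout are the decomposition of the Riemann tensor into its Weyl and Schouten parts, $R_{abcd} = \Xi\, d_{abcd} + g_{ac}L_{bd} - g_{ad}L_{bc} + g_{bd}L_{ac} - g_{bc}L_{ad}$ (in the conventions of the paper), together with the trace identity $L^a{}_a = \tfrac16 R$ and the contracted Bianchi identity $\nabla^a L_{ab} = \tfrac16\nabla_b R$. Two of the four equations are essentially immediate, while the remaining two require genuine work.

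For \eqref{WaveCFE1} I would simply take the $g^{ab}$-trace of \eqref{CFE1}: using $L^a{}_a=\tfrac16 R$ and $g^{ab}g_{ab}=4$ gives $\square\Xi = -\tfrac16\Xi R + 4s$ at once. For \eqref{WaveCFE2} I would take the divergence of \eqref{CFE2}, writing $\square s = -(\nabla^a L_{ac})\nabla^c\Xi - L_{ac}\nabla^a\nabla^c\Xi$; the first term is handled by $\nabla^a L_{ac}=\tfrac16\nabla_c R$, and in the second I substitute $\nabla^a\nabla^c\Xi = -\Xi L^{ac} + s g^{ac}$ from \eqref{CFE1}, producing exactly the terms $\Xi L_{ab}L^{ab}$, $-\tfrac16 sR$ and $-\tfrac16\nabla_a R\nabla^a\Xi$.

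The equation \eqref{WaveCFE3} for $L_{ab}$ is the first nontrivial case. I would apply $\nabla^a$ to \eqref{CFE3} to obtain $\square L_{bc} = \nabla^a\nabla_b L_{ac} + \nabla^a(\nabla_e\Xi\, d^e{}_{cab})$. In the first term I commute the derivatives, $\nabla^a\nabla_b L_{ac} = \nabla_b\nabla^a L_{ac} + [\nabla^a,\nabla_b]L_{ac}$, using $\nabla^a L_{ac}=\tfrac16\nabla_c R$ to produce the $\tfrac16\nabla_a\nabla_b R$ term and expanding the commutator into Riemann$\,\cdot\,L$ contractions. In the second term I use \eqref{CFE1} for $\nabla^a\nabla_e\Xi$ and \eqref{CFE4} (after rewriting the divergence on the appropriate index via the pair- and antisymmetries of $d_{abcd}$) to kill the derivative-of-$d$ contribution, leaving $-\Xi L^a{}_e d^e{}_{cab}$ plus trace terms that vanish. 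The remaining task is purely algebraic: inserting the Weyl--Schouten decomposition of the Riemann tensor and repeatedly using the trace-freeness and symmetries of $d_{abcd}$ to collapse everything into $4L_a{}^c L_{bc} - g_{ab}L_{cd}L^{cd} - 2\Xi d_{acbd}L^{cd}$.

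I expect \eqref{WaveCFE4} to be the main obstacle. Here the starting point is the divergence-free condition \eqref{CFE4} together with the second Bianchi identity for the (rescaled) Weyl tensor; differentiating the Bianchi identity, contracting, and commuting covariant derivatives yields $\square d_{abcd}$ in terms of a linear $\tfrac12 R\, d_{abcd}$ piece and quadratic curvature terms, which must then be reorganised into the three $\Xi\,d\,d$ contractions appearing on the right-hand side. The delicate points are keeping the conformal weight correct (since $C_{abcd}=\Xi d_{abcd}$, the quadratic-in-$C$ terms carry the single power of $\Xi$ displayed) and managing the many index symmetries of the Weyl tensor so that the curvature-squared terms combine into precisely the stated pattern; it is here that the bookkeeping is heaviest and where the computations of \cite{Pae15} are essential.
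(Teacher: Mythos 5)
Your proposal is correct and is, in essence, the very derivation the paper relies on: the paper offers no proof of this proposition at all, deferring entirely to Paetz \cite{Pae15}, and the computation there proceeds exactly along your lines --- trace of \eqref{CFE1} for \eqref{WaveCFE1}, divergence of \eqref{CFE2} with the contracted Bianchi identity $\nabla^a L_{ab}=\tfrac16\nabla_b R$ for \eqref{WaveCFE2}, divergence of \eqref{CFE3} with commutation of derivatives and the Weyl--Schouten decomposition for \eqref{WaveCFE3}, and the second Bianchi identity of the unphysical metric combined with \eqref{CFE3}--\eqref{CFE4} for \eqref{WaveCFE4}. Your arguments for \eqref{WaveCFE1} and \eqref{WaveCFE2} are complete as written, and for \eqref{WaveCFE3}--\eqref{WaveCFE4} you correctly identify every structural ingredient (including the cancellation of the trace term $s\,d^a{}_{cab}$ and the use of the pair symmetry of $d_{abcd}$ to invoke \eqref{CFE4}), leaving only the index bookkeeping that the cited reference carries out in full.
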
 

\section{Killing vectors in the conformal setting}
In this section we briefly review the theory of Killing vectors from a
conformal point of view. Our presentation follows that of \cite{Pae16}.


\subsection{Conformal properties of the Killing vector equation}

We begin by recalling the relation between Killing vectors in
the physical spacetime $(\tilde{\mathcal{M}}, \tilde{g}_{ab})$ 
and conformal Killing vectors in the unphysical
spacetime $(\mathcal{M}, g_{ab})$:

\begin{lemma}
A vector field $\tilde\xi^a$ is a Killing vector field of
$(\tilde{\mathcal{M}},\tilde{g}_{ab})$, that is
\[
\tilde\nabla_a\tilde\xi_b + \tilde\nabla_b \tilde\xi_a =0,
\]
 if and only if its push-forward
$\xi^a\equiv \varphi_* \tilde{\xi}^a$ is a conformal Killing vector field
in $(\mathcal{M},g_{ab})$, i.e.
\begin{equation}
\nabla_a \xi_b + \nabla_b \xi_a = \frac{1}{2}\nabla_c \xi^c g_{ab}
\label{UnphysicalKillingEquation1}
\end{equation}
and, moreover, one has that 
\begin{equation}
\xi^a \nabla_a \Xi = \frac{1}{4}\Xi \nabla_a \xi^a.
\label{UnphysicalKillingEquation2}
\end{equation}
\end{lemma}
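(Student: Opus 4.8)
The plan is to prove the equivalence by a direct computation relating the covariant derivative of $\tilde\xi$ under the physical connection $\tilde\nabla$ to that of $\xi$ under the unphysical connection $\nabla$, using the standard transformation law for Levi-Civita connections under the conformal rescaling $g_{ab}=\Xi^2\tilde g_{ab}$. First I would recall that the difference of the two connections is governed by the tensor $C^c{}_{ab}=\Xi^{-1}\big(\delta^c{}_a\nabla_b\Xi+\delta^c{}_b\nabla_a\Xi-g_{ab}\nabla^c\Xi\big)$, so that for any covector $\omega_b$ one has $\tilde\nabla_a\omega_b=\nabla_a\omega_b-C^c{}_{ab}\omega_c$. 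The key bookkeeping point is that $\tilde\xi^a$ and $\xi^a$ agree as vector fields on $\varphi(\tilde{\mathcal M})$ (the push-forward is the identity there), but their index-lowered versions differ because the metrics differ: $\xi_b=g_{bc}\xi^c=\Xi^2\tilde g_{bc}\tilde\xi^c=\Xi^2\tilde\xi_b$.

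The central step is to compute the physical Killing operator $\tilde\nabla_a\tilde\xi_b+\tilde\nabla_b\tilde\xi_a$ and rewrite it entirely in unphysical quantities. Substituting $\tilde\xi_b=\Xi^{-2}\xi_b$ and applying the connection transformation, I expect the product rule to generate a term $\nabla_a(\Xi^{-2}\xi_b)=\Xi^{-2}\nabla_a\xi_b-2\Xi^{-3}(\nabla_a\Xi)\xi_b$, and the connection-difference term $-C^c{}_{ab}(\Xi^{-2}\xi_c)$ to contribute pieces proportional to $\Xi^{-3}(\nabla\Xi)\xi$. After symmetrizing over $a$ and $b$, the anticipated outcome is
\[
\tilde\nabla_a\tilde\xi_b+\tilde\nabla_b\tilde\xi_a
=\Xi^{-2}\Big(\nabla_a\xi_b+\nabla_b\xi_a-2\Xi^{-1}(\xi^c\nabla_c\Xi)\,g_{ab}\Big),
\]
where the cross terms in $\xi_a\nabla_b\Xi$ cancel against those coming from $C^c{}_{ab}$, leaving only the trace term proportional to $g_{ab}$. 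I would verify this cancellation carefully, since it is the crux of the lemma.

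Granting that identity, both directions follow cleanly. If $\tilde\xi^a$ is Killing, the left-hand side vanishes, so $\nabla_a\xi_b+\nabla_b\xi_a=2\Xi^{-1}(\xi^c\nabla_c\Xi)g_{ab}$; taking the trace with $g^{ab}$ gives $2\nabla_c\xi^c=8\Xi^{-1}(\xi^c\nabla_c\Xi)$, i.e. $\xi^c\nabla_c\Xi=\tfrac14\Xi\nabla_c\xi^c$, which is exactly \eqref{UnphysicalKillingEquation2}; substituting this back reproduces the conformal Killing equation \eqref{UnphysicalKillingEquation1}. Conversely, if \eqref{UnphysicalKillingEquation1} and \eqref{UnphysicalKillingEquation2} hold, then the bracketed expression above is $\tfrac12(\nabla_c\xi^c)g_{ab}-2\Xi^{-1}(\xi^c\nabla_c\Xi)g_{ab}$, which vanishes precisely because \eqref{UnphysicalKillingEquation2} forces the two trace terms to coincide, giving $\tilde\nabla_{(a}\tilde\xi_{b)}=0$.

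The main obstacle I anticipate is purely the algebraic tracking of the $\Xi^{-3}(\nabla\Xi)$ terms: one must confirm that the off-diagonal contributions $\xi_a\nabla_b\Xi+\xi_b\nabla_a\Xi$ arising from differentiating $\Xi^{-2}$ cancel exactly against the corresponding terms from the connection-difference tensor $C^c{}_{ab}$, so that only the conformally invariant $g_{ab}$-trace term survives. This cancellation is what makes the statement an \emph{equivalence} rather than merely an implication, and it is the single computation I would double-check. Everything else is the standard conformal transformation of the Levi-Civita connection combined with elementary trace manipulations.
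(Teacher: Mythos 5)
Your overall strategy is the standard one and, in substance, it works; note that the paper itself offers no proof of this lemma (it defers to \cite{Pae16}), and what you propose is essentially the argument given there. Your central identity
\[
\tilde\nabla_a\tilde\xi_b+\tilde\nabla_b\tilde\xi_a
=\Xi^{-2}\Big(\nabla_a\xi_b+\nabla_b\xi_a-2\Xi^{-1}(\xi^c\nabla_c\Xi)\,g_{ab}\Big)
\]
is correct, and the two trace manipulations you build on it (taking the $g^{ab}$-trace in four dimensions to get $\xi^c\nabla_c\Xi=\tfrac14\Xi\nabla_c\xi^c$, then substituting back; and, conversely, observing that \eqref{UnphysicalKillingEquation1} and \eqref{UnphysicalKillingEquation2} force the bracket to vanish) correctly establish both directions of the equivalence.

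One correction is needed in the auxiliary formula you start from. With your definition $C^c{}_{ab}=\Xi^{-1}\big(\delta^c{}_a\nabla_b\Xi+\delta^c{}_b\nabla_a\Xi-g_{ab}\nabla^c\Xi\big)$, the transformation law for covectors is $\tilde\nabla_a\omega_b=\nabla_a\omega_b+C^c{}_{ab}\omega_c$, not minus: since $g_{ab}=\Xi^2\tilde g_{ab}$, the Christoffel symbols satisfy $\Gamma^c{}_{ab}=\tilde\Gamma^c{}_{ab}+C^c{}_{ab}$, hence $\tilde\Gamma^c{}_{ab}=\Gamma^c{}_{ab}-C^c{}_{ab}$, and the minus sign in the covector derivative flips this back to a plus. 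With the correct (plus) sign, the cross terms surviving in $\tilde\nabla_a\tilde\xi_b$ are the \emph{antisymmetric} combination $\Xi^{-3}(\xi_a\nabla_b\Xi-\xi_b\nabla_a\Xi)$, which cancels under symmetrization exactly as you anticipate, leaving only the $g_{ab}$-trace term. With the minus sign as you literally wrote it, the cross terms would instead accumulate to $-4\Xi^{-3}\xi_{(a}\nabla_{b)}\Xi$ and the computation would not close, so the cancellation you flagged as the crux genuinely hinges on getting this sign right. If you want to bypass the connection bookkeeping altogether, write the physical Killing operator as a Lie derivative: $\tilde\nabla_a\tilde\xi_b+\tilde\nabla_b\tilde\xi_a=\mathcal{L}_\xi\tilde g_{ab}=\mathcal{L}_\xi(\Xi^{-2}g_{ab})=\Xi^{-2}\big(\mathcal{L}_\xi g_{ab}-2\Xi^{-1}(\xi^c\nabla_c\Xi)g_{ab}\big)$, which yields your identity in one line and makes the conformal weight of each term transparent.
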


The proof of this result can be found in \cite{Pae16}. 

\begin{remark}
{\em In the following we will call equations
  \eqref{UnphysicalKillingEquation1} and
  \eqref{UnphysicalKillingEquation2} the \emph{unphysical Killing
    equations}. Observe that if $g_{ab}$ extends smoothly across
  $\mathscr{I}$, then the unphysical Killing equations are well
  defined  at the conformal boundary. }
\end{remark}
This leads to a natural question about the conditions for the existence of
unphysical Killing vectors. This will be addressed in the remaining of this
section.

\subsection{Necessary conditions}

For convenience set 
\[
\eta \equiv \frac{1}{4}\nabla_a \xi^a. 
\]
Then one has the following result:

\begin{lemma}
\label{WEKillingSpacetime}
Any solution to the unphysical Killing equations satisfies the system
\begin{subequations}
\begin{eqnarray}
&& \square \xi_a + R_a{}^b \xi_b + 2 \nabla_a \eta =0, \label{BoxXi}\\
&& \square \eta + \frac{1}{6}\xi^a \nabla_a R + \frac{1}{3}R\eta =0. \label{BoxEta}
\end{eqnarray}
\end{subequations}
\end{lemma}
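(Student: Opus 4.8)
The plan is to derive the two wave equations \eqref{BoxXi}--\eqref{BoxEta} directly from the unphysical Killing equations \eqref{UnphysicalKillingEquation1}--\eqref{UnphysicalKillingEquation2} by differentiating and commuting covariant derivatives, trading second derivatives of $\xi_a$ for curvature terms via the Ricci identity. Recall that $\eqref{UnphysicalKillingEquation1}$ can be rewritten as $\nabla_a\xi_b+\nabla_b\xi_a=2\eta\,g_{ab}$, with $\eta=\tfrac14\nabla_c\xi^c$. This is a standard conformal Killing equation, so the strategy mirrors the classical computation that a Killing vector satisfies $\square\xi_a+R_a{}^b\xi_b=0$, modified by the presence of the divergence term $2\eta$.

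For \eqref{BoxXi}, I would start from the conformal Killing equation and apply $\nabla^b$ to both sides. On the left this gives $\square\xi_a+\nabla^b\nabla_a\xi_b$; on the right it gives $2\nabla_a\eta$. The key step is to rewrite $\nabla^b\nabla_a\xi_b$ using the commutator: $\nabla^b\nabla_a\xi_b=\nabla_a\nabla^b\xi_b+R^b{}_{a}{}^{c}{}_{b}\,\xi_c$-type terms, where the curvature contribution is governed by the conventions $\nabla_c\nabla_d u^a-\nabla_d\nabla_c u^a=R^a{}_{bcd}u^b$ fixed in the excerpt. Since $\nabla^b\xi_b=4\eta$, the term $\nabla_a\nabla^b\xi_b=4\nabla_a\eta$ combines with the right-hand side, and after contracting the Riemann tensor down to the Ricci tensor one should obtain precisely $\square\xi_a+R_a{}^b\xi_b+2\nabla_a\eta=0$. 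Care with index placement and the sign of the Ricci contraction is where I expect the bookkeeping to be most delicate, but no deep idea is required here.

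For \eqref{BoxEta}, the natural route is to take a further divergence, i.e. apply $\square$ to the scalar $\eta$ directly, or equivalently apply $\nabla^a$ to the already-derived equation \eqref{BoxXi}. Taking $\nabla^a$ of $\square\xi_a+R_a{}^b\xi_b+2\nabla_a\eta=0$ yields $\nabla^a\square\xi_a+\nabla^a(R_a{}^b\xi_b)+2\square\eta=0$. One then commutes $\nabla^a$ past $\square$ (again producing curvature terms), uses $\nabla^a\xi_a=4\eta$, and invokes the contracted second Bianchi identity $\nabla^aR_{ab}=\tfrac12\nabla_bR$ to convert the divergence of the Ricci term into derivatives of $R$ acting on $\xi$. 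The coefficients $\tfrac16$ and $\tfrac13$ in \eqref{BoxEta} should emerge from these contractions together with the factor-of-four relating $\eta$ to the divergence.

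The main obstacle I anticipate is the commutator step in the second equation: moving $\nabla^a$ through $\square=\nabla^c\nabla_c$ acting on $\xi_a$ generates several curvature-times-derivative terms (schematically $R\cdot\nabla\xi$ and $\nabla R\cdot\xi$), and these must be organised and simplified using the Killing symmetry $\nabla_{(a}\xi_{b)}=\eta\,g_{ab}$ together with the Bianchi identity so that only the two surviving terms $\tfrac16\xi^a\nabla_aR$ and $\tfrac13R\eta$ remain. Keeping consistent track of symmetrisations and the sign conventions throughout is the part most likely to generate errors, so I would verify the final coefficients by independently contracting indices in the trace of the derived relation.
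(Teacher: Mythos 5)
Your proposal is correct and is essentially the paper's own argument: the paper gives no details beyond stating that the lemma ``follows by direct computation'' from \eqref{UnphysicalKillingEquation1} and \eqref{UnphysicalKillingEquation2}, and your outline is exactly that computation (divergence of the conformal Killing equation plus the Ricci identity for \eqref{BoxXi}; divergence of \eqref{BoxXi} together with the commutator and the contracted Bianchi identity $\nabla^a R_{ab}=\tfrac12\nabla_b R$, using $R_{ab}\nabla^a\xi^b=\eta R$, for \eqref{BoxEta}). I verified that with the paper's curvature conventions your steps reproduce the stated coefficients $\tfrac16$ and $\tfrac13$ exactly, so the plan goes through as written.
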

The proof of the above result follows by direct computation from
\eqref{UnphysicalKillingEquation1} and \eqref{UnphysicalKillingEquation2}.

\begin{remark}
{\em The wave equations \eqref{BoxXi} and \eqref{BoxEta} are necessary
conditions for a vector $\xi^a$ to be an \emph{unphysical Killing vector}.
However, not every solution to these equations is an unphysical Killing vector.
In this sense, a vector field satisfying \eqref{BoxXi}-\eqref{BoxEta} will be
called a \emph{unphysical Killing vector candidate}. }
\end{remark}

\subsection{The unphysical Killing equation propagation system}
\label{Subsection:UnphysicalKillingEquationPropagationSystem}

The sufficient conditions are now discussed. It will be convenient to define
the following \emph{zero-quantities}:
\begin{eqnarray*}
&& S_{ab} \equiv \nabla_a \xi_b + \nabla_b \xi_a -2 \eta g_{ab}, \\
&& S_{abc} \equiv \nabla_a S_{bc}, \\
&& \phi \equiv \xi^a\nabla_a\Xi -\Xi \eta, \\
&& \psi \equiv \eta s + \xi^a \nabla_a s - \nabla_a \eta \nabla^a \Xi,
  \\
&& B_{ab} \equiv \mathcal{L}_\xi L_{ab} + \nabla_a\nabla_b\eta,
\end{eqnarray*}
with $\mathcal{L}_\xi$ denoting the \emph{Lie derivative} along the direction
of $\xi^a$. Recall that 
\[
\mathcal{L}_\xi L_{ab} = \xi^c \nabla_c L_{ab} 
+ L_{cb}\nabla_a \xi^c + L_{ac}\nabla_b \xi^c. 
\]
In terms of these quantities, a lengthy computation leads to the following
result proved in \cite{Pae16}:

\begin{lemma}
\label{WEZQ}
Let $\xi^a$ and $\eta$ be a pair of fields satisfying equations
\eqref{BoxXi}-\eqref{BoxEta}. One then has that the tensor fields 
\[
S_{ab}, \quad S_{abc}, \quad \phi, \quad, \psi, \quad  B_{ab},
\]
satisfy a closed system of homogeneous wave equations. Schematically
one has that
\begin{subequations}
\begin{eqnarray}
&& \square S_{ab} =  H_{ab} (S,B), \label{KillingEvolution1}\\
&& \square S_{abc}=  H_{abc} (S,B,\nabla S, \nabla B), \label{KillingEvolution2}\\
&& \square \phi = H(\phi,\psi, S), \label{KillingEvolution3}\\
&& \square \psi = K(\phi,S,B,\psi,\nabla\phi), \label{KillingEvolution4}\\
&& \square B_{ab} = K_{ab}(S,B,\nabla S, \nabla B, \nabla^2S). \label{KillingEvolution5}
\end{eqnarray}
\end{subequations}
\end{lemma}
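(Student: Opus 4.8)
The plan is to establish that the five zero-quantities $S_{ab}$, $S_{abc}$, $\phi$, $\psi$ and $B_{ab}$ satisfy a closed system of homogeneous wave equations by differentiating each quantity twice and systematically substituting the conformal field equations \eqref{CFE1}-\eqref{CFE4}, the wave equations \eqref{WaveCFE1}-\eqref{WaveCFE4} for the conformal fields, and the hypothesis that $\xi^a$, $\eta$ solve \eqref{BoxXi}-\eqref{BoxEta}. The word \emph{homogeneous} is the crucial structural feature: every term on the right-hand side must be shown to be algebraic or first-order in the zero-quantities themselves (possibly contracted with curvature or $\Xi$), so that whenever all zero-quantities and their relevant derivatives vanish on a Cauchy surface, they vanish identically. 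Accordingly, the computation is not merely to compute $\square$ of each object, but to recognise that all inhomogeneous ``source'' terms conspire to reassemble into the zero-quantities.

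First I would treat $\square S_{ab}$. Writing $S_{ab} = \nabla_a\xi_b + \nabla_b\xi_a - 2\eta g_{ab}$ and applying $\square = \nabla^c\nabla_c$, I would commute derivatives using the curvature convention $\nabla_c\nabla_d u^a - \nabla_d\nabla_c u^a = R^a{}_{bcd}u^b$, producing Riemann curvature contracted with $\nabla\xi$ plus terms $\square\xi_a$ and $\nabla_a\eta$. Here is where \eqref{BoxXi} is used to eliminate $\square\xi_a$ in favour of $R_a{}^b\xi_b$ and $\nabla_a\eta$, and the second Bianchi identity together with the decomposition of Riemann into Weyl $C^a{}_{bcd}=\Xi\, d^a{}_{bcd}$ and Schouten $L_{ab}$ is used to organise the curvature terms. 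The target is to show that after these substitutions the right-hand side depends only on $S$ and $B$, which explains the schematic form $H_{ab}(S,B)$ in \eqref{KillingEvolution1}. The appearance of $B_{ab}$ is natural since $B_{ab}=\mathcal{L}_\xi L_{ab} + \nabla_a\nabla_b\eta$ packages precisely the Lie-derivative-of-Schouten terms that arise.

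Next I would handle the remaining four quantities in increasing order of derivative count. For $S_{abc}=\nabla_a S_{bc}$ one differentiates the equation for $\square S_{bc}$ and commutes $\square$ past $\nabla_a$, which generates curvature-times-$\nabla S$ terms; this accounts for the dependence of $H_{abc}$ on $\nabla S$, $\nabla B$ as well as $S$, $B$. The scalar quantities $\phi$ and $\psi$ require using \eqref{CFE1}-\eqref{CFE2} and \eqref{WaveCFE1}-\eqref{WaveCFE2}: computing $\square\phi$ with $\phi=\xi^a\nabla_a\Xi-\Xi\eta$ brings in $\square\Xi$, $\nabla_a\Xi\,\nabla^a\eta$ and $\square\eta$, the last eliminated via \eqref{BoxEta}, with the Friedrich-scalar relations reorganising the rest into $\phi$, $\psi$ and $S$. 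The tensor $B_{ab}$ is the most involved: differentiating $\mathcal{L}_\xi L_{ab}$ twice and using \eqref{WaveCFE3} for $\square L_{ab}$ produces the highest-order couplings, including a $\nabla^2 S$ term as recorded in \eqref{KillingEvolution5}.

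\textbf{Main obstacle.} The hard part will be the $B_{ab}$ equation: computing $\square\big(\mathcal{L}_\xi L_{ab}\big)$ requires commuting $\square$ with the Lie derivative along a vector that is \emph{not} assumed Killing, so one cannot use the clean identity $[\square,\mathcal{L}_\xi]=0$; instead the commutator generates terms proportional to $S_{ab}$ and its derivatives, and these must be tracked carefully and shown to close. Simultaneously, the $\nabla_a\nabla_b\eta$ piece must be fed the wave equation \eqref{BoxEta}, whose $\xi^a\nabla_a R$ term has to be matched against the $\mathcal{L}_\xi L_{ab}$ trace so that the Ricci-scalar contributions cancel rather than persisting as genuine inhomogeneities. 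Verifying this cancellation --- that every term failing to be proportional to a zero-quantity drops out by virtue of \eqref{CFE1}-\eqref{CFE4} and the definitions --- is the real content of the ``lengthy computation'' and is precisely why the result is nontrivial despite its schematic appearance. Since this computation is carried out in \cite{Pae16}, I would reference it for the detailed bookkeeping while emphasising the structural reason the system closes homogeneously.
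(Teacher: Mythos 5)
Your proposal is correct and matches the paper's treatment: the paper itself offers no proof beyond noting that ``a lengthy computation'' yields the result and citing \cite{Pae16}, and your outline is a faithful sketch of exactly that computation (commuting $\square$ through the definitions, eliminating $\square\xi_a$, $\square\eta$, $\square\Xi$, $\square s$, $\square L_{ab}$ via \eqref{BoxXi}--\eqref{BoxEta} and \eqref{WaveCFE1}--\eqref{WaveCFE3}, and recognising the residue as homogeneous in the zero-quantities), with the detailed bookkeeping deferred to the same reference. Your identification of the $[\square,\mathcal{L}_\xi]$ commutator in the $B_{ab}$ equation as the delicate point, since $\xi^a$ is not assumed Killing, is an accurate diagnosis of where the real work lies.
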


\begin{remark}
{\em In what follows the system consisting of equations
  \eqref{BoxXi}-\eqref{BoxEta} together with
  \eqref{KillingEvolution1}-\eqref{KillingEvolution5} will be called
  the \emph{unphysical Killing equation propagation system}. }
\end{remark}

The homogeneity of the unphysical Killing equation evolution system
\eqref{KillingEvolution1}-\eqref{KillingEvolution5} together with the theory of
initial-boundary value problems for systems of wave equations (see
e.g. \cite{CheWah83,DafHru85}) suggests to
consider a Dirichlet problem to ensure the existence of a solution to the
unphysical Killing vector equations. Let $\mathcal{S}_\star$ be an initial spacelike
hypersurface. The conditions for the problem are:
\begin{enumerate}[(i)]
\item Initial data 
\begin{subequations}
\begin{eqnarray}
& S_{ab}=0, \quad S_{abc}=0, \quad \phi=0, \quad \psi=0, \quad
  B_{ab}=0, &  \label{ConformalKID1} \\
& \nabla_e S_{ab}=0, \quad \nabla_e S_{abc}=0, \quad \nabla_e \phi=0,
  \quad \nabla_e \psi=0, \quad
  \nabla_e B_{ab}=0, \qquad \mbox{on} \quad \mathcal{S}_\star;&
\label{ConformalKID2}
\end{eqnarray}
\end{subequations}
\item(Dirichlet) boundary data
\begin{equation}
 S_{ab}=0, \quad S_{abc}=0, \quad \phi=0, \quad \psi=0, \quad
  B_{ab}=0, \qquad \mbox{on} \quad \mathscr{I}.
\label{VanishingBoundaryData}
\end{equation}
\end{enumerate}

If the above conditions are satisfied, the homogeneity of the wave equations
\eqref{KillingEvolution1}-\eqref{KillingEvolution5} guarantees that the only
solution of the system is the trivial one. This means, therefore, that the
solution to equations \eqref{BoxXi}-\eqref{BoxEta} will actually be an
unphysical Killing vector. 

\begin{remark}
\em{ Strictly speaking, the initial conditions require only the vanishing of
the zero-quantities and of their normal derivatives to the initial
hypersurface. If these conditions hold then the full covariant derivative of
the zero-quantities vanish initially and conversely. }
\end{remark}


\section{The conformal constraint equations}

In order to investigate conditions for the Dirichlet problem, we
recall that the conformal
Einstein equations impose some restrictions on the conformal boundary. In this
context a $3+1$ decomposition arises as a natural approach to the problem.

\subsection{The 3 + 1 decomposition of the conformal field equations}

Let {$\mathcal{K} \subset \mathcal{M}$ be a 3-dimensional  hypersurface with
normal vector $n_a$. The hypersurface $\mathcal{K}$ is endowed with a
metric $k_{ab}$
\footnote{In this work, intrinsic 3-dimensional objects will be regarded as
living on the spacetime, so they will be denoted using Latin indices
taken from the first part of the alphabet.} related
to the spacetime one via:
\[
k_{ab} = g_{ab} - \epsilon n_a n_b,
\]
where $\epsilon \equiv n_a n^a$ take either the value $1$ if
$\mathcal{K}$ is timelike or $-1$ if it is spacelike. The nilpotent operator
$k_a{}^b$ effectively projects spacetime objects into $\mathcal{K}$.
Moreover, it induces a decomposition of the covariant derivative via
the relation
\[
\nabla_a = k_a{}^b \nabla_b + \epsilon n_a n^b\nabla_b \equiv
D_a + \epsilon n_a D.
\]
Here, $D_a$ is the covariant derivative intrinsic to $\mathcal{K}$ which
satisfies the metric compatibility condition $D_a k_{bc} = 0$, and $D$
corresponds to the derivative in the normal direction. Additionally, the
intrinsic curvature associated to $\mathcal{K}$, denoted by $K_{ab}$, can be
conveniently expressed in terms of the acceleration $a_b \equiv n^c\nabla_c n_b$ as
\[
\nabla_a n_b = K_{ab} + n_a a_b.
\]

The fields appearing in the conformal Einstein field equations can be
naturally decomposed using the projector $k_a{}^b$. Relevant for
the subsequent work, let
\[
\Sigma, \quad s, \quad k_{ab}, \quad \theta_a, \quad \theta_{ab}, \quad d_{ab},
\quad d_{abc}
\]
denote, respectively, the pull-backs of 
\begin{eqnarray*}
&n^a \nabla_a \Xi, \quad s, \quad g_{ab}, \quad n^c k_a{}^d L_{cd},
  \quad k_a{}^ck_b{}^d L_{cd}, \quad 
n^b n^d k_e{}^a k_f{}^cd_{abcd}, \quad n^b k_e{}^a k_f{}^c
k_g{}^d d_{abcd}
\end{eqnarray*}
to $\mathcal{K}$. 

\begin{remark}
{\em The fields $d_{ab}$ and $d_{abc}$ represent, respectively, the
\emph{electric} and \emph{magnetic parts} of the rescaled Weyl tensor
$d_{abcd}$ with respect to the normal $n_a$. The following properties can be
verified:
\begin{eqnarray*}
& d_a{}^a=0, \quad d_{ab}=d_{ba}, \quad d_{abc}=-d_{acb}, \quad d_{[abc]}=0.
\end{eqnarray*}}
\end{remark}

\subsection{The conformal constraint equations} 

When the Einstein field equations \eqref{CFE1}-\eqref{CFE5} are projected into
a hypersurface $\mathcal{K}$ via $k_a{}^b$, the result is a system known as
the \emph{conformal constraint equations}. In terms of the quantities defined
above, a long computation results in the system
\begin{subequations}
\begin{eqnarray}
&& D_i D_j \Omega = -\epsilon \Sigma K_{ij} -\Omega L_{ij} + s
k_{ij}, \label{ConformalConstraint1} \\
&& D_i \Sigma = K_i{}^k D_k \Omega -\Omega L_i, \label{ConformalConstraint2} \\
&& D_i s = -\epsilon L_i \Sigma - L_{ik} D^k \Omega, \label{ConformalConstraint3}\\
 && D_i L_{jk} -D_j L_{ik} = -\epsilon \Sigma d_{kij}
+ D^l \Omega d_{lkij} -\epsilon (K_{ik} L_j
- K_{jk} L_i), \label{ConformalConstraint4}\\
 && D_i L_j - D_j L_i = D^l \Omega d_{lij} +
K_i{}^k L_{jk} - K_j{}^k L_{ik}, \label{ConformalConstraint5}\\
&& D^k d_{kij} =\epsilon \big(K^k{}_i d_{jk}
   -K^k{}_j d_{ik}\big), \label{ConformalConstraint6}\\
&& D^i d_{ij}= K^{ik} d_{ijk}, \label{ConformalConstraint7}\\
&& \lambda = 6 \Omega s - 3\epsilon \Sigma^2 - 3 D_k \Omega D^k\Omega.
\label{ConformalConstraint8}
\end{eqnarray}
\end{subequations}
Additionally, these are supplemented by the conformal versions of the
Codazzi-Mainardi and Gauss-Codazzi equations. These are, respectively:
\begin{subequations}
\begin{eqnarray}
&& D_j K_{ki} - D_k K_{ji} = \Omega d_{ijk} + k_{ij} L_k -
 k_{ik}L_j, \label{ConformalConstraint9} \\
&& l_{ij} = -\epsilon\Omega d_{ij} + L_{ij} + \epsilon \bigg( K \big(
 K_{ij} -\displaystyle\frac{1}{4} K k_{ij}\big) - K_{ki}
  K_j{}^k + \displaystyle\frac{1}{4}  K_{kl} K^{kl}k_{\bmi\bmj}\bigg).
  \label{ConformalConstraint10}
\end{eqnarray}
\end{subequations}
Here, $l_{ab}$ is the 3-dimensional Schouten tensor, given in terms
of the associated Ricci tensor and scalar $r_{ab}$ and $r$, respectively, by
\[
l_{ab} \equiv r_{ab} - \frac14 rk_{ab}.
\]
A detailed derivation of this equations, as well as a discussion about some of
their properties, can be found in \cite{CFEBook}.  In the following it will be shown that, under
a gauge choice, this system enables us to analyse the conformal boundary in a
simpler way.

\subsubsection{The conformal constraints on $\mathscr{I}$}

Hereafter, $\simeq$ will denote equality at the conformal boundary $\mathscr{I}$.  When the
constraints \eqref{ConformalConstraint1}-\eqref{ConformalConstraint8}, along
with \eqref{ConformalConstraint9} and \eqref{ConformalConstraint10} are
evaluated on  $\mathscr{I}$ ---for which $\epsilon = 1$--- they take a particularly
simple form as, by definition, the conformal factor identically vanishes. It follows that
the constraints on $\mathscr{I}$ are:
\begin{subequations}
\begin{eqnarray}
&& s \ell_{ab} \simeq \Sigma K_{ab}, \label{CCCB1}\\
&& D_a \Sigma \simeq 0, \label{CCCB2}\\
&& D_a s \simeq -\Sigma \theta_a, \label{CCCB3}\\
&& D_a \theta_{bc} - D_b \theta_{ac} \simeq -\Sigma d_{cab} + (K_{bc}\theta_a - K_{ac}\theta_b),
\label{CCCB4}\\
&& D_a \theta_b - D_b \theta_a \simeq K_a{}^c \theta_{bc} - K_b{}^c \theta_{ac}, \label{CCCB5}\\
&& D^c d_{cab} \simeq K^a{}_b d_{ac} - K^c{}_a d_{bc}, \label{CCCB6}\\
&& D^a d_{ab}\simeq K^{ac} d_{abc}, \label{CCCB7}\\
&& \lambda \simeq -3 \Sigma^2, \label{CCCB8}\\
&& D_b K_{ac} - D_c K_{ab}\simeq \ell_{ab} \theta_c - \ell_{ac} \theta_b, \label{CCCB9}\\
&& l_{ab} \simeq \theta_{ab} + K\big( K_{ab} - \frac{1}{4} K \ell_{ab} \big) -
   K_{ac}K_b{}^c + \frac{1}{4}K_{cd}K^{cd} \ell_{ab}. \label{CCCB10}
\end{eqnarray}
\end{subequations}
In \cite{Fri95}, an approach to find a solution of the above system has been given.
The main characteristic of this method resides in regarding $s$ as a gauge
quantity. Such result can be enunciated as follows:

\begin{proposition}
\label{ConstSolBdy}
Given a 3-dimensional Lorentzian metric $\ell_{ab}$, a smooth function
$\varkappa$ and a symmetric field $d_{ab}$ satisfying $\ell^{ab}d_{ab} =0 $ and
$D^ad_{ab}=0$, then the following fields are a solution to the conformal
constraint equations \eqref{CCCB1}-\eqref{CCCB10} on $\mathscr{I}$:
\begin{subequations}
\begin{eqnarray}
&& \Sigma \simeq \sqrt{\frac{|\lambda|}{3}}, \label{ConstSolBdy1} \\
&& s \simeq \varkappa\Sigma, \label{ConstSolBdy2} \\
&& K_{ab} \simeq \varkappa\ell_{ab}, \label{ConstSolBdy3} \\
&& \theta_a \simeq -D_a\varkappa, \label{ConstSolBdy4} \\
&& \theta_{ab} \simeq l_{ab} - \frac12\varkappa^2 \ell_{ab}, \label{ConstSolBdy5} \\
&& d_{abc} \simeq 
-\Sigma^{-1}y_{abc}, \label{ConstSolBdy6}
\end{eqnarray}
\end{subequations}
where $y_{abc} \equiv D_b l_{ca} - D_c l_{ba}$ is the Cotton tensor of the metric
$\bm\ell$.
\end{proposition}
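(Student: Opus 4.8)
The plan is to verify by direct substitution that the fields \eqref{ConstSolBdy1}--\eqref{ConstSolBdy6} satisfy each of the ten constraints \eqref{CCCB1}--\eqref{CCCB10}, using throughout that $D_a$ is the Levi-Civita connection of $\ell_{ab}$ (so that $D_a\ell_{bc}=0$) and that $\Sigma$ is constant. The constraints fall into three groups by difficulty. The first group, \eqref{CCCB8}, \eqref{CCCB2}, \eqref{CCCB1} and \eqref{CCCB3}, is immediate: \eqref{ConstSolBdy1} is exactly the content of \eqref{CCCB8} once one recalls $\lambda<0$; the constancy of $\Sigma$ gives \eqref{CCCB2}; and inserting $s\simeq\varkappa\Sigma$, $K_{ab}\simeq\varkappa\ell_{ab}$ and $\theta_a\simeq-D_a\varkappa$ collapses \eqref{CCCB1} and \eqref{CCCB3} to trivial identities (for the latter, $D_a s\simeq\Sigma\,D_a\varkappa=-\Sigma\theta_a$).

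The second group fixes $\theta_{ab}$ and $d_{abc}$. For \eqref{CCCB10} I would substitute $K_{ab}\simeq\varkappa\ell_{ab}$, compute the trace $K\simeq 3\varkappa$ together with $K_{ac}K_b{}^c\simeq\varkappa^2\ell_{ab}$ and $K_{cd}K^{cd}\simeq 3\varkappa^2$, and check that the quadratic-in-$K$ terms collapse to $\tfrac12\varkappa^2\ell_{ab}$, which is precisely consistent with the definition \eqref{ConstSolBdy5} of $\theta_{ab}$. Equations \eqref{CCCB9} and \eqref{CCCB5} then follow because $D_bK_{ac}\simeq(D_b\varkappa)\ell_{ac}$ reproduces the right-hand side of \eqref{CCCB9}, while in \eqref{CCCB5} the left-hand side $D_a\theta_b-D_b\theta_a$ vanishes (second derivatives of the scalar $\varkappa$ commute) and the right-hand side $K_a{}^c\theta_{bc}-K_b{}^c\theta_{ac}$ also vanishes, both terms reducing to $\varkappa\theta_{ab}$. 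The crucial step is \eqref{CCCB4}: inserting $\theta_{ab}\simeq l_{ab}-\tfrac12\varkappa^2\ell_{ab}$ and using $D_a\ell_{bc}=0$ produces $D_a\theta_{bc}-D_b\theta_{ac}\simeq D_al_{bc}-D_bl_{ac}-\varkappa(D_a\varkappa)\ell_{bc}+\varkappa(D_b\varkappa)\ell_{ac}$, and the terms proportional to $\varkappa\,D\varkappa$ cancel exactly against $K_{bc}\theta_a-K_{ac}\theta_b$, leaving $D_al_{bc}-D_bl_{ac}\simeq-\Sigma d_{cab}$. Recognising $D_al_{bc}-D_bl_{ac}=y_{cab}$ from the definition of the Cotton tensor then forces $d_{abc}\simeq-\Sigma^{-1}y_{abc}$, i.e. \eqref{ConstSolBdy6}. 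This is the step at which the Cotton tensor enters the construction.

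The remaining, and most delicate, group is the pair of divergence constraints \eqref{CCCB6} and \eqref{CCCB7}. For \eqref{CCCB6} I would note that, with $K^c{}_a\simeq\varkappa\delta^c_a$ and $d_{ab}$ symmetric, the right-hand side vanishes, so the constraint reduces to $D^cd_{cab}\simeq 0$, that is, to $D^cy_{cab}=0$. This is exactly the statement that the Cotton tensor of a three-dimensional metric is divergence-free on its distinguished index; I expect this to be the main obstacle, and I would establish it by commuting the two intrinsic derivatives in $D^c(D_al_{bc}-D_bl_{ac})$ and invoking the three-dimensional contracted Bianchi identity $D^cl_{bc}=D_bl$ together with the fact that in three dimensions the Riemann tensor is fully determined by $l_{ab}$. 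For \eqref{CCCB7} the left-hand side $D^ad_{ab}$ vanishes by hypothesis, so it remains to show $K^{ac}d_{abc}\simeq\varkappa\,\ell^{ac}d_{abc}=0$; since $\ell^{ac}d_{abc}=-\Sigma^{-1}\ell^{ac}y_{abc}$, this reduces to the tracelessness $\ell^{ac}y_{abc}=0$, which follows directly from the same identity $D^cl_{bc}=D_bl$. Assembling the three groups then completes the verification.
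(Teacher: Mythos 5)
Your verification is correct. There is, however, nothing in the paper to compare it against: the authors give no proof of this proposition, which is quoted from Friedrich \cite{Fri95} (the text merely records that the solution arises by treating $s$ as a gauge quantity), so direct substitution is exactly the right thing to do, and every step of your check goes through. The algebraic equations \eqref{CCCB1}, \eqref{CCCB2}, \eqref{CCCB3} and \eqref{CCCB8} are immediate; \eqref{CCCB10} collapses to the definition \eqref{ConstSolBdy5}; \eqref{CCCB9} and \eqref{CCCB5} hold because $K_{ab}$ is pure trace and $\theta_{ab}$ is symmetric; \eqref{CCCB4} is precisely the equation that forces \eqref{ConstSolBdy6}; and \eqref{CCCB6}, \eqref{CCCB7} reduce, respectively, to $D^{c}y_{cab}=0$ and $\ell^{ac}y_{abc}=0$. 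Two small remarks. First, the right-hand side of \eqref{CCCB6} as printed carries an index typo (compare \eqref{ConformalConstraint6}; it should read $K^{c}{}_{a}d_{bc}-K^{c}{}_{b}d_{ac}$); your reading is the correct one, and in any case it vanishes for $K_{ab}\simeq\varkappa\ell_{ab}$ and symmetric $d_{ab}$. Second, your plan for the key identity $D^{c}y_{cab}=0$ is sound but slightly over-equipped: after commuting the derivatives, the terms of the form Riemann${}\times l$ cancel pairwise by the pair-interchange symmetry of the Riemann tensor, and the Ricci${}\times l$ terms cancel by the symmetry of $l_{ab}$, so the only nontrivial input is the contracted Bianchi identity in the form $D^{c}l_{bc}=D_{b}l$ (which also disposes of $\ell^{ac}y_{abc}=0$); the three-dimensional expression of the Riemann tensor in terms of $l_{ab}$ is not actually needed, although invoking it would of course also lead to the same cancellations.
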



\section{Decomposition of the zero-quantities}

The  $3+1$ decomposition described in the previous section is also
key to study the zero-quantities associated to the Killing
vector equation on a given hypersurface
$\mathcal{K}$. In this respect, let define the following relevant quantities:

\[
\zeta_a, \quad \zeta, \quad
\mathcal{S}_{ab}, \quad \mathcal{S}_a, \quad \mathcal{S}, \quad
\mathcal{S}_{abc}, \quad \mathcal{B}_{ab} \quad \mathcal{B}_a, \quad \mathcal{B}
\]
as the respective the pull-backs of the following projections of the Killing
vector candidate $\xi_a$ and the zero-quantities into
$\mathcal{K}$:
\begin{eqnarray*}
& k_a{}^b \xi_b, \quad n^a\xi_a, \quad 
k_a{}^ck_b{}^dS_{cd}, \quad n^ck_a{}^bS_{bc}, \quad n^an^b S_{ab}, \quad
k_a{}^d k_b{}^e k_c{}^f S_{def}, & \\ 
& k_a{}^ck_b{}^d B_{cd}, \quad
n^ck_a{}^bB_{bc}, \quad n^an^bB_{ab}. &
\end{eqnarray*}
In the next subsection, the vanishing of the zero-quantities
on $\mathcal{S}_\star$ and $\mathscr{I}$ will be analysed using these objects.

\begin{remark}
{\em As mentioned in Section
\ref{Subsection:UnphysicalKillingEquationPropagationSystem}, the initial
data for the wave equations
\eqref{KillingEvolution1}-\eqref{KillingEvolution5} requieres the
vanishing of not only the zero-quantities on the initial hypersurface
but also the vanishing of their first order covariant derivatives. Given that we can decompose
$\nabla_a$ in terms of intrinsic and normal operators, then
if the zero-quantities vanish initially so will all their intrinsic
derivatives. Thus, the subsequent analysis only needs to consider normal derivatives.}
\end{remark}

\subsection{Decomposition of $\phi$ and $\psi$}

From their definitions, a straightforward decomposition of the zero-quantities
$\phi$,  $\psi$, and their normal derivatives, leads to the following
expressions:
\begin{subequations}
\begin{eqnarray}
&& \phi  = \zeta^a D_a\Xi + \epsilon\zeta\Sigma - \eta\Xi, \label{DecompPhi} \\
&& n^a\nabla_a \phi  = -\eta\Sigma - \Xi D\eta + 
D\zeta^a D_a\Xi + \zeta^a(D_a\Sigma - K_a{}^c D_c\Xi)
 + \epsilon(\zeta D\Sigma + \Sigma D\zeta), \label{DecompDerPhi}
\end{eqnarray}
\end{subequations}
and
\begin{subequations}
\begin{eqnarray}
&& \psi  = \eta s + \zeta^a D_a s + \epsilon\zeta Ds - D_a\eta D^a\Xi -\epsilon\Sigma D\eta, \label{DecompPsi} \\
&& n^a\nabla_a\psi   = \eta Ds + sD\eta + D\zeta^aD_a s + \zeta^a(D_aDs - K_a{}^bD_bs)
- D^a\eta(D_a\Sigma - K_a{}^b D_b\Xi) \nonumber \\
&& \hskip+1.5cm - D_a\Xi(D^aD\eta - K^a{}_bD^b\eta) +\epsilon(\zeta D^2s + D\zeta Ds - D\Sigma D\eta -\Sigma D^2\eta).
\label{DecompDerPsi}
\end{eqnarray}
\end{subequations}

\subsection{Decomposition of $S_{ab}$ and  $B_{ab}$ and
 their derivatives}

Before performing a decomposition of the remaining
zero-quantities some observations can be made about the redundancy of some of
their components. For this task their explicit decompositions will not be required but
expressions will be given in terms of functions which are homogeneous in some
zero-quantities and their derivatives; this will prove to be useful when
imposing the vanishing initial-boundary data.

\begin{lemma}
\label{CondDerSab}
Let $\mathcal{K} \subset \mathcal{M}$ be either a timelike or spacelike
hypersurface.  Assume that $S_{ab}, \ D\mathcal{S}_{ab}, \ \mathcal{B}_{ab}$
and $D\mathcal{B}_{ab}$ are known on $\mathcal{K}$. Then, the remaining
components of the zero--quantities and their first-order derivatives can be
computed on $\mathcal{K}$.
\end{lemma}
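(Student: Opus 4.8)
The plan is to exploit two structural facts about the zero-quantities that hold once $\xi^a$ and $\eta$ satisfy the candidate equations \eqref{BoxXi}-\eqref{BoxEta}: that $S_{ab}$ is symmetric, trace-free and divergence-free, and that $S_{abc}$ is, by its very definition, $\nabla_a S_{bc}$. Indeed, the definition of $S_{ab}$ together with $\eta=\tfrac14\nabla_a\xi^a$ gives $g^{ab}S_{ab}=0$ at once, while a standard commutation of derivatives yields $\nabla^a\nabla_b\xi_a = 4\nabla_b\eta + R_{ba}\xi^a$, so that $\nabla^a S_{ab} = \square\xi_b + R_b{}^a\xi_a + 2\nabla_b\eta$, which vanishes precisely by \eqref{BoxXi}. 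These two identities are the engine of the whole argument.

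First I would reconstruct $S_{abc}$ and, with it, all first derivatives of $S_{ab}$. Splitting $\nabla_a = D_a + \epsilon n_a D$ in $S_{abc}=\nabla_a S_{bc}$, every component whose first index is tangential is an intrinsic derivative of the known tensor $S_{ab}$ plus algebraic terms built from $K_{ab}$, the acceleration $a_b$ and the components $\mathcal{S}_a$, $\mathcal{S}$; all of these are available on $\mathcal{K}$. The components with first index normal are the pieces of $DS_{bc}$: its tangential--tangential part is the given datum $D\mathcal{S}_{ab}$ (modulo projector-derivative corrections), while the missing pieces $D\mathcal{S}_a$ and $D\mathcal{S}$ are fixed algebraically. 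Contracting the trace-free condition $g^{ab}DS_{ab}=0$ with the normal produces $D\mathcal{S}$ in terms of the $k$-trace of $D\mathcal{S}_{ab}$, and contracting the divergence-free condition $\nabla^a S_{ab}=0$ with $k_b{}^c$ yields $D\mathcal{S}_a$; in both cases the sought normal derivative enters with an invertible (identity-type) coefficient, so the linear system is solvable in terms of known data, the remaining normal projection of the divergence serving as a consistency relation.

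For $B_{ab}$ the same philosophy applies with different identities. Taking the trace of $B_{ab}=\mathcal{L}_\xi L_{ab}+\nabla_a\nabla_b\eta$ and using \eqref{BoxEta} together with $g^{ab}L_{ab}=\tfrac16 R$ gives $g^{ab}B_{ab}=S^{ab}L_{ab}$, which is homogeneous in $S$; decomposing this trace into tangential and normal parts then determines $\mathcal{B}$ from the given $\mathcal{B}_{ab}$. The mixed component $\mathcal{B}_a$ is recovered from the divergence of $B_{ab}$, which --- because the zero-quantities satisfy the closed system of \Cref{WEZQ} --- can be written homogeneously in the zero-quantities and their first derivatives; combined with the given $D\mathcal{B}_{ab}$ this also closes the reconstruction of the first derivatives of $B_{ab}$. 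Finally, $\phi$, $\psi$ and their normal derivatives are handled through the decompositions \eqref{DecompPhi}-\eqref{DecompDerPsi} and the defining relations, expressing them as homogeneous functions of the data already reconstructed.

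The main obstacle I anticipate is establishing the divergence identity for $B_{ab}$, and more generally keeping the bookkeeping of the projector-derivative, extrinsic-curvature and acceleration terms that appear whenever $D$ is commuted past $k_a{}^b$ and $n_a$. These terms do not threaten the homogeneity --- each is proportional to a component of a zero-quantity or one of its derivatives --- but verifying that the coefficients multiplying the sought normal derivatives are genuinely invertible, and that no hidden component of $S_{abc}$ or $B_{ab}$ is left undetermined, is where the real work lies.
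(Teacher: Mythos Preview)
Your treatment of the $S$-sector is essentially the paper's: the identities $\nabla^a S_{ab}=0$ and $S_a{}^a=0$ are jointly equivalent to the combined relation $\nabla_a S_b{}^a-\tfrac12\nabla_b S_a{}^a=0$ that the paper invokes, and decomposing it yields $D\mathcal{S}_a$ and $D\mathcal{S}$ in terms of the data, whence $S_{abc}$ and $\nabla_e S_{ab}$ follow just as you describe.

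The gap is in the $B$-sector. Your trace argument $g^{ab}B_{ab}=S^{ab}L_{ab}$ does recover $\mathcal{B}$ correctly, but the divergence identity for $B_{ab}$ does \emph{not} give you $\mathcal{B}_a$. When you decompose $\nabla^a B_{ab}$ and project tangentially, the leading normal-derivative term is $\epsilon D\mathcal{B}_c$, accompanied by algebraic pieces in $\mathcal{B}_a$ carrying factors of $K_{ab}$; you obtain a single relation in the two unknowns $D\mathcal{B}_c$ and $\mathcal{B}_c$. This is exactly parallel to what happened for $S_{ab}$ --- but there $\mathcal{S}_a$ was already part of the given $S_{ab}$, whereas here $\mathcal{B}_a$ is not among the hypotheses. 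The datum $D\mathcal{B}_{ab}$ does not help either, since it concerns a different projection.

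The missing idea, and what the paper actually does, is to exploit the wave equation \eqref{KillingEvolution1} itself. Writing $\square S_{ab}$ as $\epsilon D^2 S_{ab}$ plus lower-order pieces, that equation becomes $D^2 S_{ab}=-4\epsilon B_{ab}+f(S,\nabla S,D_cD_d S)$. Projecting on the mixed and normal--normal slots turns the recovery of $\mathcal{B}_a$ and $\mathcal{B}$ into the recovery of $D^2\mathcal{S}_a$ and $D^2\mathcal{S}$; these in turn follow by applying one more derivative to the divergence/trace identity for $S_{ab}$ and decomposing. Only \emph{after} $\mathcal{B}_a$ and $\mathcal{B}$ are obtained in this way does the divergence-type identity for $B_{ab}$ (the paper uses $\nabla_a B_b{}^a-\tfrac12\nabla_b B_a{}^a=S_{cd}(\nabla^c L_b{}^d-\tfrac12\nabla_b L^{cd})$) serve its purpose, namely to produce $D\mathcal{B}_a$ and $D\mathcal{B}$.
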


\begin{proof}
In the following, for ease of presentation, let $f$ denote a generic
homogeneous function of its arguments which may change from line to line.
As pointed out in \cite{Pae16}, equation \eqref{BoxXi} implies the identity
\begin{equation}
\nabla_a S_b{}^a - \frac12\nabla_b S_a{}^a = 0.
\label{PaetzIdentitySab}
\end{equation}
Expressing $S_{ab}$ in terms of its components, a short calculation yields
\begin{equation}
\epsilon D\mathcal{S}_b + \frac12 n_bD\mathcal{S} = 
f(\mathcal{S}_{ab}, \ D_a \mathcal{S}_{bc}, \ D\mathcal{S}_{ab}).
\label{PaetzIdentitySabDecomposed}
\end{equation}
Multiplying this equation by $k_a{}^b$, an equation for $D\mathcal{S}_a$ is
obtained.  Similarly, multiplying equation \eqref{PaetzIdentitySabDecomposed}
by $n^b$ we obtain an analogous expression for $D\mathcal{S}$.
Then, all the components of $DS_{ab}$ can be computed on
$\mathcal{K}$ and, in consequence, $\nabla_a S_{bc}$ is known. This
determines $S_{abc}$ on the hypersurface.

In order to analyse the fields derived from $B_{ab}$, consider equation
\eqref{KillingEvolution1} which can be written in a more explicit way as:
\begin{equation}
D^2S_{ab} = -4\epsilon B_{ab} + f(S_{ab},  \ \nabla_cS_{ab}, \ D_cD_d S_{ab}). \label{Obs2WE}
\end{equation}
As it is assumed that $\mathcal{B}_{ab}$ is known on $\mathcal{K}$, then one
can solve for $D^2\mathcal{S}_{ab}$ from this last equation; in particular,
$D^2\mathcal{S}_a{}^a$ can be computed. On the other hand, applying $\nabla_c$
to \eqref{PaetzIdentitySab}, a lengthy but direct decomposition leads to the
following two relations:
\begin{subequations}
\begin{eqnarray}
&& D^2\mathcal{S}_a = f(S_{ab}, \ \nabla_cS_{ab}), \label{Obs2Paetz1} \\
&& \epsilon D^2\mathcal{S} = D^2\mathcal{S}_a{}^a + f(S_{ab}, \ \nabla_cS_{ab}). \label{Obs2Paetz2}
\end{eqnarray}
\end{subequations}
From here we observe that their right-hand sides are either
known or computable on $\mathcal{K}$ so the components $D^2\mathcal{S}_a$ and
$D^2\mathcal{S}$ are determined. Thus, \eqref{Obs2WE} implies that the
components $\mathcal{B}_a$ and $\mathcal{B}$ can be computed.

\smallskip
Regarding the normal derivatives of $B_{ab}$, we make use of the 
identity
\[
\nabla_a B_{b}{}^a - \tfrac12\nabla_{b}B_a{}^a = S_{cd}(\nabla^c L_b{}^d - \tfrac12\nabla_bL^{cd}),
\]
whose validity is guaranteed by equations \eqref{BoxXi} and \eqref{BoxEta}
---see \cite{Pae16}.  Observe that its left hand side has the same form as
equation \eqref{PaetzIdentitySab}, while its right hand side is homogeneous on
$S_{ab}$ ---which is already known. Then we conclude that $D\mathcal{B}_a$ and
$D\mathcal{B}$ are computable.

\smallskip
Finally, the normal derivative of $S_{abc}$ can be analysed from its
definition.  Commuting derivatives, a short calculation yields:
\[
DS_{abc} = D_a(DS_{bc}) + \epsilon n_a D^2S_{bc} + f(S_{ab}, \ \nabla_c S_{ab}).
\]
Since it has been proved that all the terms are either computable or part of
the given data on $\mathcal{K}$, the proof is complete.

\end{proof}

\begin{remark}
\label{TimelikeAssumptions}

\em{\Cref{CondDerSab} is valid either for a spacelike or timelike hypersurface,
but given that it assumes certain normal derivatives, it is naturally adapted
to a spacelike hypersurface where first-order derivatives are
assumed as part of the initial data. If $\mathcal{K}$ is timelike and Dirichlet
conditions are assumed, then $D\mathcal{S}_{ab}$ plays the role of the only
necessary component of $S_{abc}$, while $D\mathcal{B}_{ab}$ is not required.}

\end{remark}

In view of the previous result,  the explicit form of the
remaining independent data under a decomposition on $\mathcal{K}$ is
given by:
\begin{subequations}
\begin{eqnarray}
&& \mathcal{S}_{ab}  = D_a \zeta_b + D_b \zeta_a + 2\epsilon\zeta K_{ab} - 2\eta k_{ab}, \label{DecompSab} \\
&& \mathcal{S}_{a}  = D\zeta_a + D_a\zeta - \zeta^bK_{ab}, \label{DecompSa}\\
&& \mathcal{S}  = 2D\zeta -2\epsilon\eta, \label{DecompS} \\
&& D\mathcal{S}_{ab} = 2D_{(a} D\zeta_{b)} - 2K_{(a}{}^c D_{|c|}\zeta_{b)}
+ 2\zeta^c D_cK_{ab} - 2\zeta^c D_{(a}K_{b)c}
+ 2\epsilon\zeta DK_{ab} \nonumber \\
&& \hskip1.3cm +2\epsilon K_{ab}D\zeta - 2k_{ab} D\eta, \label{NormalDerSab} \\
&& \mathcal{B}_{ab}  = \zeta^{c} D_{c}\theta_{ab} + 2\theta_{c(a} D_{b)}\zeta^{c} 
+ 2\epsilon \zeta K_{(a}{}^{c} \theta_{b)c} + 2\epsilon \theta_{(a} D_{b)}\zeta 
+ \epsilon \zeta D\theta_{ab} + D_a D_b \eta, \label{DecompBab} \\
&& D\mathcal{B}_{ab} = D_c\theta_{ab}D\zeta^c + 
K_c{}^eD_e\theta_{ab} + D_cD\theta_{ab} + 2n^e\theta_{(a}{}^d R_{b)dec} +
2K_{(a}{}^c\theta_{b)c}D\zeta \nonumber \\
&& \hskip+1.3cm +  2\zeta K_{(a}{}^cD\theta_{b)c}
+ \zeta\theta_{c(a}DK_{b)}{}^c + 2D_{(b}\zeta^cD\theta_{a)c} + 2\theta_{c(a}(D_b D\zeta^c - K_{b)}{}^eD_e\zeta^c \nonumber \\
&& \hskip+1.3cm + n^eR_{b)ed}{}^c\zeta^d) + 2D\theta_{(a}D_{b)}\zeta + 2\theta_{(a}D_{b)}D\zeta
- 2\theta_{(a}K_{b)}{}^cD_c\zeta + 2\zeta D^2\theta_{ab} \nonumber \\
&& \hskip+1.3cm + 2D\zeta D\theta_{ab} + D_aD_bD\eta - 2K_{(a}{}^eD_{b)}D_e\eta -D_e\eta D_aK_b{}^e
 - n^eR_{aeb}{}^dD_b\eta
. \label{DecompDBab}
\end{eqnarray}
\end{subequations}

\section{Boundary analysis}

The aim of this section is to discuss the explicit requirements a
well-posed initial-boundary problem with vanishing Dirichlet data
impose on the conformal Killing vector candidate and the related
quantities. As a result of this analysis it will be shown that some
components cannot be freely chosen either on 
$\mathscr{I}$.

%
%
%
%

\subsection{Zero-quantities on $\mathscr{I}$}

In this subsection we study the decomposition for the zero-quantities
associated to the Dirichlet boundary conditions for the Killing vector equation
evolution system. As mentioned in \Cref{TimelikeAssumptions}, the independent
data on $\mathscr{I}$ are given by $\phi, \ \psi$, $\mathcal{S}_{ab}, \
D\mathcal{S}_{ab}$ and $\mathcal{B}_{ab}$.  Evaluating equations
\eqref{DecompPhi}, \eqref{DecompPsi} and \eqref{DecompSab}-\eqref{DecompBab} on
$\mathscr{I}$ one obtains


\begin{subequations}
\begin{eqnarray} 
&& \phi \simeq \Sigma\zeta, \label{PhiBdy} \\
&& \psi \simeq \eta s + \zeta^a D_a s + \zeta Ds - \Sigma D\eta, \label{PsiBdy} \\
&& \mathcal{S}_{ab} \simeq D_{a}\zeta_{b} + D_{b}\zeta_{a} + 2\varkappa\zeta \ell_{ab} - 2 \eta \ell_{ab}, \label{SabBdy} \\
&& \mathcal{S}_a \simeq D_a\zeta + D \zeta_a - \varkappa\zeta_a, \label{SaBdy}\\
&& \mathcal{S} \simeq 2 D\zeta - 2\eta \label{SBdy}, \\
&& D\mathcal{S}_{ab} \simeq 2D_{(a} D\zeta_{b)} - 2\varkappa D_{(a}\zeta_{b)}
+ 2\ell_{ab}\zeta^c D_c\varkappa - 2\zeta_{(a} D_{b)}\varkappa + 2\zeta DK_{ab} \nonumber \\
&& \hskip1.3cm +2\varkappa\ell_{ab}D\zeta - 2\ell_{ab} D\eta, \label{DSabBdy} \\
&& \mathcal{B}_{ab} \simeq \zeta^{c} D_{c} l_{ab} + 2 l_{c(a} D_{b)}\zeta^{c} 
+ 2\varkappa\zeta l_{ab} - 2 D_{(a}\varkappa D_{b)}\zeta 
+ \zeta D\theta_{ab} + D_a D_b \eta. \label{BabBdy}
\end{eqnarray}
\end{subequations}

Imposing Dirichlet vanishing data on $\mathscr{I}$, equations
\eqref{PhiBdy}-\eqref{BabBdy} provide a number of conditions for the
fields and their derivatives on the conformal boundary. Using the definition of
$\eta$ and the result of \Cref{ConstSolBdy} it follows that the set of
independent conditions is given by:

\begin{subequations}
\begin{eqnarray}
&& \zeta \simeq 0, \label{EtaBdy} \\ 
&& D\zeta_a \simeq \varkappa\zeta_a, \label{DerZetaBdy} \\
&& D_a\zeta_b + D_b\zeta_a \simeq 2\eta\ell_{ab}, \label{KillingVectorBdy} \\
&& D\eta \simeq \eta\varkappa + \zeta^c D_c\varkappa, \label{DerEtaBdy} \\
&& \mathcal{L}_\zeta l_{ab} + D_a D_b \eta \simeq 0. \label{DerDerEtaBdy}
\end{eqnarray}
\end{subequations}
Conversely, it is straightforward to check that equations
\eqref{EtaBdy}-\eqref{DerDerEtaBdy} are sufficient to guarantee the vanishing
of the equations \eqref{PhiBdy}-\eqref{BabBdy}. The above discussion leads to
the following proposition:

\begin{proposition}
\label{VanishingZQ}
Let $(\mathcal{M}, g_{ab})$ be a conformal extension of an anti-de Sitter
spacetime $(\mathcal{\tilde{M}}, \tilde{g}_{ab})$ with timelike conformal
boundary $\mathscr{I}$.  Let $\xi^a$ be a conformal Killing vector field
candidate and $\phi, \ \psi, \ S_{ab}, \ B_{ab}$ and $S_{abc}$ be the
corresponding zero-quantities. Then, the zero-quantities in
equations \eqref{PhiBdy}-\eqref{BabBdy} vanish on
$\mathscr{I}$ if and only if the components $\zeta_a, \
\zeta$ and $\eta$ satisfy the conditions \eqref{EtaBdy}-\eqref{DerDerEtaBdy}.
\end{proposition}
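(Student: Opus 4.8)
The plan is to establish the stated equivalence by showing that the boundary expressions \eqref{PhiBdy}--\eqref{BabBdy} vanish precisely when the reduced conditions \eqref{EtaBdy}--\eqref{DerDerEtaBdy} hold. The starting point is the already-derived evaluations of the zero-quantities on $\mathscr{I}$, which express $\phi$, $\psi$, $\mathcal{S}_{ab}$, $\mathcal{S}_a$, $\mathcal{S}$, $D\mathcal{S}_{ab}$ and $\mathcal{B}_{ab}$ in terms of the intrinsic data $\zeta_a$, $\zeta$, $\eta$ and the solved constraint fields from \Cref{ConstSolBdy}. First I would address the forward direction: assuming all of \eqref{PhiBdy}--\eqref{BabBdy} vanish, I extract the reduced conditions one at a time. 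From \eqref{PhiBdy}, since $\Sigma \simeq \sqrt{|\lambda|/3} \neq 0$ by \eqref{ConstSolBdy1}, the vanishing of $\phi$ immediately forces $\zeta \simeq 0$, giving \eqref{EtaBdy}. This is the crucial first step because $\zeta \simeq 0$ propagates through all the remaining expressions and dramatically simplifies them.

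With $\zeta \simeq 0$ in hand, I would substitute it back into \eqref{SaBdy}, \eqref{SBdy}, \eqref{SabBdy} and \eqref{DSabBdy}. The vanishing of $\mathcal{S} \simeq 2D\zeta - 2\eta$ yields $D\zeta \simeq \eta$; the vanishing of $\mathcal{S}_a \simeq D_a\zeta + D\zeta_a - \varkappa\zeta_a$, combined with $\zeta \simeq 0$ (so $D_a\zeta \simeq 0$), gives $D\zeta_a \simeq \varkappa\zeta_a$, which is \eqref{DerZetaBdy}. The vanishing of $\mathcal{S}_{ab}$ with $\zeta \simeq 0$ reduces \eqref{SabBdy} to $D_a\zeta_b + D_b\zeta_a \simeq 2\eta\ell_{ab}$, which is exactly \eqref{KillingVectorBdy} --- the conformal Killing equation on the boundary. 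Next, the vanishing of $D\mathcal{S}_{ab}$ in \eqref{DSabBdy}, after imposing $\zeta \simeq 0$ and the already-obtained relations, should collapse to a condition that combines with $\psi \simeq 0$ to produce \eqref{DerEtaBdy}; here I would use \eqref{PsiBdy} together with the constraint $D_a s \simeq -\Sigma\theta_a \simeq \Sigma D_a\varkappa$ (from \eqref{CCCB3} and \eqref{ConstSolBdy4}) to rewrite $\psi \simeq 0$ as $\eta s + \zeta Ds - \Sigma D\eta \simeq 0$, and upon using $s \simeq \varkappa\Sigma$ and $\zeta \simeq 0$ deduce $D\eta \simeq \eta\varkappa + \zeta^c D_c\varkappa$. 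Finally, the vanishing of $\mathcal{B}_{ab}$ in \eqref{BabBdy}, with $\zeta \simeq 0$, reduces to $2 l_{c(a}D_{b)}\zeta^c + \zeta^c D_c l_{ab} + D_aD_b\eta \simeq 0$, which I would recognise as $\mathcal{L}_\zeta l_{ab} + D_aD_b\eta \simeq 0$, namely \eqref{DerDerEtaBdy}, by recalling the definition of the Lie derivative of $l_{ab}$.

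For the converse direction, I would simply run the argument backwards: assuming \eqref{EtaBdy}--\eqref{DerDerEtaBdy}, substitute $\zeta \simeq 0$ and the four remaining reduced conditions into each of \eqref{PhiBdy}--\eqref{BabBdy} and verify directly that every expression vanishes. This is essentially a bookkeeping check, but it must be done carefully for the auxiliary components $\mathcal{S}_a$, $\mathcal{S}$ and $D\mathcal{S}_{ab}$, whose vanishing is not among the five listed reduced conditions but follows from them --- this is the content of the remark that \eqref{EtaBdy}--\eqref{DerDerEtaBdy} are exactly the \emph{independent} conditions. I expect the main obstacle to be the treatment of $D\mathcal{S}_{ab}$ in \eqref{DSabBdy} and $\mathcal{B}_{ab}$ in \eqref{BabBdy}, since these involve normal derivatives of $K_{ab}$ and $\theta_{ab}$ through the terms $\zeta DK_{ab}$ and $\zeta D\theta_{ab}$; although these are multiplied by $\zeta$ and hence drop out once $\zeta \simeq 0$, I would need to confirm that no uncancelled normal-derivative term survives, and that the identification of the surviving terms with the Lie derivative $\mathcal{L}_\zeta l_{ab}$ and with the clean form of \eqref{DerEtaBdy} genuinely uses the constraint relations \eqref{CCCB3} and the solved forms \eqref{ConstSolBdy2}--\eqref{ConstSolBdy5} rather than introducing spurious obstructions.
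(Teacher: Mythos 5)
Your overall strategy coincides with the paper's: evaluate the projected zero-quantities on $\mathscr{I}$ using the solved constraints of \Cref{ConstSolBdy}, extract the reduced conditions, and verify the converse by substitution. Your forward direction is correct and fills in exactly the computations the paper leaves implicit: $\Sigma \simeq \sqrt{|\lambda|/3} \neq 0$ from \eqref{ConstSolBdy1} turns \eqref{PhiBdy} into \eqref{EtaBdy}, and the extractions of \eqref{DerZetaBdy}, \eqref{KillingVectorBdy}, \eqref{DerEtaBdy} (via \eqref{CCCB3}, \eqref{ConstSolBdy2} and \eqref{ConstSolBdy4}) and \eqref{DerDerEtaBdy} all go through as you describe. (A minor slip: your intermediate rewriting of $\psi \simeq 0$ drops the $\zeta^a D_a s$ term, although your final formula restores the $\zeta^c D_c\varkappa$ contribution, so this reads as a transcription error rather than a real mistake.)

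The genuine gap is in the converse direction. The five conditions \eqref{EtaBdy}--\eqref{DerDerEtaBdy} constrain $\zeta$, $\zeta_a$, $\eta$, $D\zeta_a$ and $D\eta$, but say nothing about the normal derivative $D\zeta$. Yet $\mathcal{S} \simeq 2D\zeta - 2\eta$ by \eqref{SBdy}, and after you impose \eqref{DerZetaBdy} and \eqref{DerEtaBdy} in \eqref{DSabBdy} a residual term survives: $D\mathcal{S}_{ab} \simeq 2\ell_{ab}\bigl(\zeta^c D_c\varkappa + \varkappa D\zeta - D\eta\bigr) \simeq 2\varkappa\ell_{ab}\bigl(D\zeta - \eta\bigr)$. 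So "running the argument backwards" stalls precisely at $\mathcal{S}$ and $D\mathcal{S}_{ab}$: your claim that the vanishing of the auxiliary components "follows from" the reduced conditions is asserted, not proved, and it cannot be proved by bookkeeping alone. The missing ingredient --- which the paper flags with the phrase "using the definition of $\eta$" --- is that $\eta$ is not an independent field but $\eta \equiv \tfrac14 \nabla_a\xi^a$. Decomposing this identity on $\mathscr{I}$ gives $4\eta \simeq D_a\zeta^a + D\zeta + 3\varkappa\zeta$ in the adapted gauge; imposing $\zeta \simeq 0$ and the trace of \eqref{KillingVectorBdy}, namely $D_a\zeta^a \simeq 3\eta$, then forces $D\zeta \simeq \eta$, which is exactly what annihilates $\mathcal{S}$ and the residual term in $D\mathcal{S}_{ab}$. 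This same observation explains why the relation $D\zeta \simeq \eta$, which you correctly derive from $\mathcal{S} \simeq 0$ in the forward direction, does not appear as a sixth independent condition in the reduced set. Without invoking the definition of $\eta$ at this point, the stated equivalence is not established.
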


\begin{remark}
\em{Equations \eqref{EtaBdy}-\eqref{DerDerEtaBdy} will be called the \emph{Killing
boundary data}. They acquire a simpler form if one makes use of a gauge for which $\varkappa=0$.}
\end{remark}

\subsection{Existence of the intrinsic conformal Killing vector}

As stated in \Cref{VanishingZQ}, one of the necessary conditions under which
the set of zero-quantities vanish on $\mathscr{I}$ is given by
\eqref{KillingVectorBdy}  ---i.e. the transversal component $\zeta_a$
of the conformal Killing vector candidate has to be a
conformal Killing vector with respect to the connection $D_a$. In order to
guarantee the existence of a solution to this equation we consider an
initial value problem on $\mathscr{I}$. Following the model of the
spacetime problem, we construct a
suitable wave equation for $\zeta_a$. More precisely, one has the following

\begin{lemma} 
\label{WaveEquationsKillingVector}
Let $\zeta_a$ and $\eta$ a pair of fields satisfying the conformal Killing
equation \eqref{KillingVectorBdy} and \eqref{DerDerEtaBdy} on
$\mathscr{I}$. Then, it follows that 
\begin{subequations}
\begin{eqnarray}
&& \Delta\zeta_a \simeq - r_a {}^b \zeta_b - D_a\eta, \label{WEBdy1} \\
&& \Delta\eta \simeq
- \tfrac{1}{2} \eta r -  \tfrac{1}{4} \zeta^{b} D_{b}r, \label{WEBdy2}
\end{eqnarray}
\end{subequations}
where $\Delta \equiv \ell^{ab}D_aD_b$ is the D'Alambertian operator of the
metric $\ell_{ab}$.
\end{lemma}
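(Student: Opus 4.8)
The plan is to derive both wave equations by direct computation intrinsic to the conformal boundary $(\mathscr{I},\ell_{ab})$, treating this as the three-dimensional analogue of the spacetime argument behind \Cref{WEKillingSpacetime}. Throughout, the only inputs are the conformal Killing equation \eqref{KillingVectorBdy} and its second-order companion \eqref{DerDerEtaBdy}, together with the intrinsic Ricci identity for the connection $D_a$ and the metric compatibility $D_a\ell_{bc}=0$. No spacetime field equations are needed at this stage. A preliminary observation I would record is that tracing \eqref{KillingVectorBdy} with $\ell^{ab}$ gives $D_c\zeta^c\simeq 3\eta$, so that $\eta$ is exactly the conformal factor of $\zeta_a$; this identity feeds both computations.

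For \eqref{WEBdy1}, I would apply the divergence $D^b$ to \eqref{KillingVectorBdy}. On the right-hand side, metric compatibility yields $D^b(2\eta\ell_{ab})\simeq 2D_a\eta$. On the left-hand side one term is $D^bD_b\zeta_a=\Delta\zeta_a$, while the other is $D^bD_a\zeta_b$, which I would rewrite by commuting derivatives: $D^bD_a\zeta_b\simeq D_aD^b\zeta_b+[D^b,D_a]\zeta_b$. Using $D^b\zeta_b\simeq 3\eta$ the first piece is $3D_a\eta$, and the Ricci identity converts the commutator into a single Ricci contraction $r_a{}^b\zeta_b$. Collecting terms, $3D_a\eta+r_a{}^b\zeta_b+\Delta\zeta_a\simeq 2D_a\eta$, which rearranges to the claimed $\Delta\zeta_a\simeq -r_a{}^b\zeta_b-D_a\eta$.

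For \eqref{WEBdy2}, I would trace \eqref{DerDerEtaBdy} with $\ell^{ab}$, giving $\Delta\eta\simeq-\ell^{ab}\mathcal{L}_\zeta l_{ab}$. Expanding the Lie derivative as $\mathcal{L}_\zeta l_{ab}=\zeta^c D_c l_{ab}+l_{cb}D_a\zeta^c+l_{ac}D_b\zeta^c$ and tracing, the transport term becomes $\zeta^c D_c l$ (with $l\equiv\ell^{ab}l_{ab}$, using metric compatibility to pull $\ell^{ab}$ through $D_c$), while the two derivative terms combine into $2l_{ac}D^a\zeta^c=l_{ac}(D^a\zeta^c+D^c\zeta^a)$; inserting \eqref{KillingVectorBdy} in its raised form $D^a\zeta^c+D^c\zeta^a\simeq 2\eta\ell^{ac}$ collapses this to $2\eta\, l$. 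In three dimensions the Schouten trace is $l=\ell^{ab}(r_{ab}-\tfrac14 r\ell_{ab})=\tfrac14 r$, so $\ell^{ab}\mathcal{L}_\zeta l_{ab}\simeq \tfrac14\zeta^c D_c r+\tfrac12\eta r$, and negating gives \eqref{WEBdy2}.

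I expect the only real obstacle to be careful sign and index bookkeeping in the three-dimensional Ricci identity used for \eqref{WEBdy1}: one must track the curvature convention $R^a{}_{bcd}u^b=[\nabla_c,\nabla_d]u^a$ down to $D_a$ and verify that the contraction $\ell^{bc}r^d{}_{bca}$ reduces to $-r_a{}^d$ with the correct sign, so that the commutator contributes exactly $+r_a{}^b\zeta_b$. The computation for \eqref{WEBdy2} is comparatively mechanical, the only point of care being the dimension-dependent value $l=\tfrac14 r$; I would state this explicitly since it is what fixes the numerical coefficients $\tfrac12$ and $\tfrac14$ in the final expression.
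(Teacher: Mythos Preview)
Your proposal is correct and follows precisely the approach indicated in the paper, which simply states that the result is obtained by applying $D^a$ to \eqref{KillingVectorBdy} and tracing \eqref{DerDerEtaBdy}. Your write-up supplies the details the paper omits --- the trace identity $D_c\zeta^c\simeq 3\eta$, the Ricci-identity commutator producing $r_a{}^b\zeta_b$, and the three-dimensional Schouten trace $l=\tfrac14 r$ --- and these are all handled correctly.
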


\begin{proof}
The result is readily obtained by applying $D^a$ to \eqref{KillingVectorBdy} and
taking the trace of \eqref{DerDerEtaBdy}.
\end{proof}

\begin{remark} \label{Corner} \em{Given that this system of wave equations
propagates $\eta$ and $\zeta_a$ along the conformal boundary, it must be
provided with initial data at the corner $\partial\mathcal{S} = \mathcal{S}_\star
\cap \mathscr{I}$, where $\mathcal{S}_\star \subset \mathcal{M}$ is some initial spacelike
hypersurface.} \end{remark}

To prove that a solution to these wave equations also solves the conformal
Killing equation on the boundary, a suitable system of wave equations for the
corresponding 3-dimensional zero-quantities has to be constructed.  The desired
relations are contained in the following lemma:

\begin{lemma}
\label{WaveEquationsZeroQuantities}
Let $\mathcal{S}_{ab}, \ \mathcal{S}_{abc}$ and $\mathcal{B}_{ab}$ be the
projections of the zero-quantities $S_{ab}, \ S_{abc}$ and $B_{ab}$ into
$\mathscr{I}$, respectively. Assume that there exist fields $\zeta_a$ and
$\eta$ on $\mathscr{I}$ satisfying the wave equations \eqref{WEBdy1} and
\eqref{WEBdy2} in \Cref{WaveEquationsKillingVector}. Then, one has that
\begin{subequations}
\begin{eqnarray*}
&& \Delta\mathcal{S}_{ab} \simeq l_{b}{}^{c} \mathcal{S}_{ac} + l_{a}{}^{c} \mathcal{S}_{bc} 
- 2 r_{acbd} \mathcal{S}^{cd} - 2 \mathcal{B}_{ab} \\
&& \Delta \mathcal{S}_{eab} \simeq
r_{e}{}^{c} \mathcal{S}_{cab} - 2 r_{bdec} \mathcal{S}^{c}{}_{a}{}^{d}
- 2 r_{adec} \mathcal{S}^{c}{}_{b}{}^{d} -  \tfrac{1}{2} r \mathcal{S}_{eab} + r_{b}{}^{c} \mathcal{S}_{eac} 
+ r_{a}{}^{c} \mathcal{S}_{ebc} - 2 r_{acbd} \mathcal{S}_{e}{}^{cd} \\
&& \hspace{1.4cm} + \mathcal{S}_{b}{}^{c} D_{a}r_{ec}
 + \mathcal{S}_{a}{}^{c} D_{b}r_{ec} -  \mathcal{S}_{b}{}^{c} D_{c}r_{ae} -  \mathcal{S}_{a}{}^{c} D_{c}r_{be}
+ \mathcal{S}_{b}{}^{c} D_{e}r_{ac} + \mathcal{S}_{a}{}^{c} D_{e}r_{bc} \\
&& \hspace{1.4cm} - \tfrac{1}{2} \mathcal{S}_{ab} D_{e}r - 2 \mathcal{S}^{cd} D_{e}r_{acbd} - 2 D_{e}\mathcal{B}_{ab} \\
&& \Delta \mathcal{B}_{ab} \simeq \mathcal{O}_{ab}
+ f(\mathcal{B}_{ab}, \ \mathcal{S}_{ab}, \ \mathcal{S}_{abc}, \ D_c\mathcal{S}_{ab}, \ 
D_d \mathcal{S}_{abc}),
\end{eqnarray*}
\end{subequations}
where 
\begin{equation}
\label{Obstruction}
\mathcal{O}_{ab} \equiv \mathcal{L}_{\zeta}D_c y_a{}^c{}_b
+ 2\eta D_c y_a{}^c{}_b + 2 D_c\eta y_{(a}{}^c{}_{b)}
\end{equation}
and $f$ is a homogeneous function of its arguments.
\end{lemma}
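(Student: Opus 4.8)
The plan is to reproduce, intrinsically on the $3$-dimensional Lorentzian manifold $(\mathscr{I},\ell_{ab})$, the spacetime computation behind \Cref{WEZQ}, with the role of the (rescaled) Weyl tensor now played by the Cotton tensor $y_{abc}$. Throughout I treat $\zeta_a$ and $\eta$ as fields on $\mathscr{I}$ obeying the wave equations \eqref{WEBdy1}--\eqref{WEBdy2}, and regard the three objects as the intrinsic conformal-Killing zero-quantities $\mathcal{S}_{ab}=D_a\zeta_b+D_b\zeta_a-2\eta\ell_{ab}$ (the conformal Killing operator, cf.\ \eqref{SabBdy}), $\mathcal{S}_{eab}=D_e\mathcal{S}_{ab}$ (mirroring $S_{abc}=\nabla_aS_{bc}$), and $\mathcal{B}_{ab}=\mathcal{L}_\zeta l_{ab}+D_aD_b\eta$ (cf.\ \eqref{BabBdy} and \eqref{DerDerEtaBdy}). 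The essential simplification relative to the spacetime case is that in three dimensions the Riemann tensor is algebraically determined by the Ricci tensor $r_{ab}$ and scalar $r$, so every curvature commutator is expressible through $r_{ab}$ and $r$ alone.

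For the first identity I would apply $\Delta=\ell^{cd}D_cD_d$ to $\mathcal{S}_{ab}$, commute $\Delta$ through the outer derivative $D_{(a}$ using the $3$-dimensional commutator $[\Delta,D_a]$, and substitute $\Delta\zeta_a\simeq -r_a{}^b\zeta_b-D_a\eta$ from \eqref{WEBdy1} and $\Delta\eta$ from \eqref{WEBdy2}. The first-derivative-of-curvature terms, together with the $D_aD_b\eta$ contributions, reorganise precisely into the combination $\mathcal{L}_\zeta l_{ab}+D_aD_b\eta=\mathcal{B}_{ab}$, yielding $\Delta\mathcal{S}_{ab}\simeq l_b{}^c\mathcal{S}_{ac}+l_a{}^c\mathcal{S}_{bc}-2r_{acbd}\mathcal{S}^{cd}-2\mathcal{B}_{ab}$. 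The second identity is then purely mechanical: since $\mathcal{S}_{eab}=D_e\mathcal{S}_{ab}$, I write $\Delta\mathcal{S}_{eab}=D_e(\Delta\mathcal{S}_{ab})+[\Delta,D_e]\mathcal{S}_{ab}$, insert the expression just obtained, and commute $D_e$ past the curvature coefficients. Differentiating the $\mathcal{B}_{ab}$-term produces the $-2D_e\mathcal{B}_{ab}$ on the right, while $[\Delta,D_e]\mathcal{S}_{ab}$ and the derivatives of $r_{acbd}$, $r$ (converting between $l_{ab}$ and $r_{ab}$ via $l_{ab}=r_{ab}-\tfrac14 r\ell_{ab}$) account for the remaining homogeneous curvature-times-$\mathcal{S}$ terms displayed in the statement.

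The crux is the third identity, where the obstruction appears. Applying $\Delta$ to $\mathcal{B}_{ab}=\mathcal{L}_\zeta l_{ab}+D_aD_b\eta$ requires commuting $\Delta$ with $\mathcal{L}_\zeta$ and with the second covariant derivative, using \eqref{WEBdy2} for $\eta$ and the contracted second Bianchi identity relating $D^c l_{ab}$ to the divergence of the Cotton tensor $y_{abc}=D_bl_{ca}-D_cl_{ba}$. All terms carrying a factor of $\mathcal{S}_{ab}$, $\mathcal{S}_{abc}$, $\mathcal{B}_{ab}$ or their derivatives collect into the homogeneous $f$; the genuinely inhomogeneous remainder is exactly the Lie derivative and the $\eta$-weighted divergence of the Cotton tensor, which assemble into $\mathcal{O}_{ab}=\mathcal{L}_\zeta D_cy_a{}^c{}_b+2\eta D_cy_a{}^c{}_b+2D_c\eta\, y_{(a}{}^c{}_{b)}$ of \eqref{Obstruction}.

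The main obstacle is the bookkeeping in this last step: one must track all Cotton-tensor terms arising from $\Delta\mathcal{L}_\zeta l_{ab}$ and verify that no non-homogeneous contribution other than $\mathcal{O}_{ab}$ survives. Conceptually this is precisely the point where the spacetime argument of \Cref{WEZQ} fails to close: there the divergence-free condition \eqref{CFE4} on the rescaled Weyl tensor, together with \eqref{CFE3}, annihilates the analogous term, whereas in three dimensions the Weyl tensor is absent and its role is taken by the Cotton tensor---whose divergence $D_cy_a{}^c{}_b$ does not vanish in general---so the obstruction $\mathcal{O}_{ab}$ is left behind. The boundary identity $d_{abc}\simeq-\Sigma^{-1}y_{abc}$ of \Cref{ConstSolBdy} makes the correspondence between the spacetime rescaled Weyl tensor and the intrinsic Cotton tensor explicit, confirming that $\mathcal{O}_{ab}$ is the boundary trace of the Weyl structure that closes the spacetime system.
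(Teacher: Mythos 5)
Your proposal is correct in outline, and for the first two equations it coincides with the paper's own (terse) proof, which simply declares them ``obtained by direct calculation.'' For the third, decisive equation, however, your route genuinely differs from the paper's. You propose to apply $\Delta$ directly to $\mathcal{B}_{ab}=\mathcal{L}_\zeta l_{ab}+D_aD_b\eta$, commute $\Delta$ through $\mathcal{L}_\zeta$ and the Hessian, and let the Cotton tensor emerge \emph{intrinsically}, via the three-dimensional identities $D^b l_{ab}=\tfrac14 D_a r$ and $\Delta l_{ab}=D_c y_a{}^c{}_b+\tfrac14 D_aD_b r+(\mathrm{curvature}\times l)_{ab}$. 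The paper instead stays tied to the spacetime: it first derives two first-order identities --- the divergence identity $D_a\mathcal{B}_b{}^a\simeq\tfrac12 l^{ac}\mathcal{S}_{bac}+\mathcal{S}^{ac}D_c l_{ba}$, and an exchange identity relating $D_a\mathcal{B}_{bc}$ to $D_c\mathcal{B}_{ab}$ whose inhomogeneous terms carry the boundary values of the spacetime fields $\theta_{ab}$ and $\Sigma d_{abc}$ --- then applies $D^a$ to the latter, uses the former together with the Bianchi identities, and only at the last step converts $\Sigma d_{abc}$ into $-y_{abc}$ through the constraint \eqref{ConstSolBdy6} of \Cref{ConstSolBdy}. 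So in the paper the Cotton tensor enters \emph{extrinsically}, as the boundary trace of the magnetic part of the rescaled Weyl tensor, while in your scheme that correspondence is demoted to the consistency remark you make at the end. Your approach buys a self-contained argument making manifest that \Cref{MainPropBoundary} is a statement about the Lorentzian $3$-geometry $(\mathscr{I},\ell_{ab})$ alone; the paper's buys reuse of the spacetime zero-quantity identities of Paetz and of the constraint machinery already in place, keeping the spacetime--boundary link explicit. Two points to watch if you carry out your computation: first, your intrinsic identifications of $\mathcal{S}_{ab}$ and $\mathcal{B}_{ab}$ agree with the projections \eqref{SabBdy} and \eqref{BabBdy} only modulo the remaining boundary conditions (notably $\zeta\simeq 0$), so you should state explicitly that you work with the quantities built from $(\zeta_a,\eta)$; second, the commutator of $\Delta$ with $\mathcal{L}_\zeta$ is governed by $\mathcal{L}_\zeta\Gamma^c{}_{ab}$, which is homogeneous in $\mathcal{S}_{abc}$ only up to $D\eta$-terms --- these must be tracked carefully, since they are precisely what produces the $2D_c\eta\, y_{(a}{}^c{}_{b)}$ piece of $\mathcal{O}_{ab}$ rather than terms that cancel.
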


\begin{proof}
The wave equations for $\mathcal{S}_{ab}$ and $\mathcal{S}_{abc}$ are obtained
by direct calculation. For the zero--quantity $\mathcal{B}_{ab}$ we have the
two following identities:
\begin{subequations}
\begin{eqnarray*}
&& D_{a}\mathcal{B}_{b}{}^{a} \simeq \tfrac{1}{2} l^{ac} \mathcal{S}_{bac} + \mathcal{S}^{ac} D_{c}l_{ba}, \\
&& D_{a}\mathcal{B}_{bc} \simeq \tfrac{1}{2} \theta_{c}{}^{d} \mathcal{S}_{abd} + 
\tfrac{1}{2} \theta_{c}{}^{d} \mathcal{S}_{bad} -  \tfrac{1}{2} \theta_{a}{}^{d} \mathcal{S}_{bcd}
- \tfrac{1}{2} \theta_{a}{}^{d} \mathcal{S}_{cbd} -  \tfrac{1}{2} \theta_{c}{}^{d} \mathcal{S}_{dab}
+ \tfrac{1}{2} \theta_{a}{}^{d} \mathcal{S}_{dbc} 
+ d_{bcd} \Sigma D_{a}\zeta^{d} \\
&& \hspace{1.4cm} - d_{dac} \Sigma D_{b}\zeta^{d} - d_{bad} \Sigma D_{c}\zeta^{d} + D_{c}\mathcal{B}_{ab} 
- \zeta^{d} \Sigma D_{d}d_{bac}.
\end{eqnarray*}
\end{subequations}
Applying the $D^a$ operator to the latter expression and then using the former
one, as well as using the Bianchi identities, one has that:
\begin{subequations}
\begin{eqnarray*}
&& \Delta \mathcal{B}_{ab} \simeq 
2 \eta \Sigma D_{c}d_{ab}{}^{c} + \Sigma \zeta^{e} D_{e}D_{c}d_{ab}{}^{c} -
 \Sigma d_{cab} D^{c}\eta + 2\Sigma d_{bac} D^{c}\eta +
 \Sigma D_a \zeta^{c} D_{e}d_{cb}{}^e \\
&& \hspace{1.3cm} + \Sigma D_{c}\zeta^b D_e d_{ac}{}^e
+ f(\mathcal{B}_{ab}, \ \mathcal{S}_{ab}, \ \mathcal{S}_{abc}, \
D_c\mathcal{S}_{ab}, \ D_d \mathcal{S}_{abc}) \\
&& \hspace{0.9cm} \simeq - \mathcal{L}_{\zeta}D_c y_{ab}{}^c - 2 \eta D_{c}y_{ab}{}^{c}
 - 2  D^c\eta y_{abc} -  D^c\eta y_{cab} 
+ f(\mathcal{B}_{ab}, \ \mathcal{S}_{ab}, \ \mathcal{S}_{abc}, \ D_c\mathcal{S}_{ab}, \ 
D_d \mathcal{S}_{abc}) \\
&& \hspace{0.9cm} \simeq \mathcal{L}_{\zeta}D_c y_a{}^c{}_b
+ 2\eta D_c y_a{}^c{}_b + D_c\eta(y_a{}^c{}_b + y_b{}^c{}_a)
+ f(\mathcal{B}_{ab}, \ \mathcal{S}_{ab}, \ \mathcal{S}_{abc}, \ D_c\mathcal{S}_{ab}, \ 
D_d \mathcal{S}_{abc}).
\end{eqnarray*}
\end{subequations}
\end{proof}

\begin{remark}
\em{The system of wave equations in the previous lemma is homogeneous in the
zero-quantities $\mathcal{S}_{ab}, \ \mathcal{S}_{abc}$ and $\mathcal{B}_{ab}$
as long as the \emph{obstruction tensor} $\mathcal{O}_{ab}$ vanishes identically on
$\mathscr{I}$.}
\end{remark}

\begin{remark}
\em{ If $\mathscr{I}$ is conformally flat, then the obstruction tensor
vanishes identically as $y_{abc} = 0$.}
\end{remark}

Lemmas \ref{WaveEquationsKillingVector} and \ref{WaveEquationsZeroQuantities}
lead to the following proposition:

\begin{proposition}
\label{MainPropBoundary}
Let $(\mathcal{M}, g_{ab})$ a conformal extension of an anti-de Sitter-like
spacetime with corner $\partial\mathcal{S} \equiv \mathcal{S}_\star \cap
\mathscr{I}$. Let $\zeta_a$ and $\eta$ fields satisfying
\eqref{KillingVectorBdy} and \eqref{DerDerEtaBdy}, and $y_{abc}$ a tensor
with the symmetries of the magnetic part of the Weyl tensor.
Assume that $\mathcal{S}_{ab}, \ \mathcal{B}_{ab}$ and $\mathcal{S}_{abc}$
vanish identically at $\partial\mathcal{S}$. Then $\zeta_a$ satisfies the
unphysical conformal Killing equation on $\mathscr{I}$ if and only if $
\mathcal{O}_{ab} \simeq 0$.
\end{proposition}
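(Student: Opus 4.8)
The plan is to read the equivalence directly off the subsidiary wave system of \Cref{WaveEquationsZeroQuantities}, in which the obstruction tensor $\mathcal{O}_{ab}$ is the \emph{only} inhomogeneous term. Treating $\zeta_a,\eta$ as a solution of the wave system of \Cref{WaveEquationsKillingVector} (so that \Cref{WaveEquationsZeroQuantities} applies), the intrinsic zero-quantities obey, schematically, $\Delta\mathcal{S}_{ab}\simeq F_{ab}(\mathcal{S},\mathcal{B})$, $\Delta\mathcal{S}_{eab}\simeq F_{eab}(\mathcal{S},\mathcal{B},D\mathcal{S},D\mathcal{B})$ and $\Delta\mathcal{B}_{ab}\simeq\mathcal{O}_{ab}+f(\mathcal{S}_{ab},\mathcal{S}_{abc},\mathcal{B}_{ab},D_c\mathcal{S}_{ab},D_d\mathcal{S}_{abc})$, with $F_{ab},F_{eab},f$ homogeneous in the zero-quantities. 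Since $\ell_{ab}$ is Lorentzian, $\Delta=\ell^{ab}D_aD_b$ is a wave operator, so the triple $(\mathcal{S}_{ab},\mathcal{S}_{abc},\mathcal{B}_{ab})$ satisfies a linear second-order hyperbolic system on $(\mathscr{I},\ell_{ab})$ whose Cauchy surface is the corner $\partial\mathcal{S}$. The whole proposition is then a uniqueness statement for this system.

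For sufficiency ($\mathcal{O}_{ab}\simeq0\Rightarrow$ conformal Killing) I would set $\mathcal{O}_{ab}\simeq0$, turning the system genuinely homogeneous, and invoke uniqueness for the linear homogeneous Cauchy problem on $(\mathscr{I},\ell_{ab})$ with initial surface $\partial\mathcal{S}$. The hypothesis that $\mathcal{S}_{ab},\mathcal{S}_{abc},\mathcal{B}_{ab}$ vanish at $\partial\mathcal{S}$ supplies the zeroth-order Cauchy data; moreover, since $\mathcal{S}_{abc}$ reproduces the intrinsic derivative $D_a\mathcal{S}_{bc}$ up to terms homogeneous in the zero-quantities, its vanishing together with that of $\mathcal{S}_{ab}$ also kills the derivative of $\mathcal{S}_{ab}$ transverse to $\partial\mathcal{S}$. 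Uniqueness then forces $\mathcal{S}_{ab}\simeq\mathcal{S}_{abc}\simeq\mathcal{B}_{ab}\simeq0$ throughout the domain of dependence, hence on all of $\mathscr{I}$; in particular $\mathcal{S}_{ab}\simeq0$, which is precisely the unphysical conformal Killing equation \eqref{KillingVectorBdy} for $\zeta_a$.

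For necessity ($\zeta_a$ conformal Killing $\Rightarrow\mathcal{O}_{ab}\simeq0$) I would use the standard fact that, for a conformal Killing field with $\mathcal{L}_\zeta\ell_{ab}=2\eta\ell_{ab}$, the three-dimensional Schouten tensor satisfies $\mathcal{L}_\zeta l_{ab}=-D_aD_b\eta$; equivalently $\mathcal{B}_{ab}\simeq0$, i.e. \eqref{DerDerEtaBdy}, holds automatically (note that the trace of $\mathcal{S}_{ab}\simeq0$ already forces $\eta\simeq\tfrac13 D_c\zeta^c$, so the identity applies to the $\eta$ of the problem, and via \Cref{WaveEquationsKillingVector} the wave equations \eqref{WEBdy1}--\eqref{WEBdy2} do hold). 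With $\mathcal{S}_{ab}\simeq0$ everywhere, all its intrinsic derivatives vanish, so $\mathcal{S}_{abc}\simeq0$ and $D_c\mathcal{S}_{ab}\simeq0$; every argument of the homogeneous $f$ then vanishes while $\Delta\mathcal{B}_{ab}\simeq0$ trivially, and the $\mathcal{B}_{ab}$-equation of \Cref{WaveEquationsZeroQuantities} collapses to $0\simeq\mathcal{O}_{ab}$.

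The hard part is the sufficiency direction, and within it the verification that the Cauchy data of the \emph{whole} triple is trivial at $\partial\mathcal{S}$. Vanishing of $\mathcal{S}_{ab}$ and $\mathcal{S}_{abc}$ controls $\mathcal{S}_{ab}$ together with its transverse derivative, but the transverse derivatives of $\mathcal{S}_{abc}$ and $\mathcal{B}_{ab}$ are not assumed directly and must be shown to be fixed by — and hence to vanish together with — the given data. This is an intrinsic-to-$\mathscr{I}$ repetition of the reduction carried out in \Cref{CondDerSab} and \Cref{TimelikeAssumptions}, now using the divergence identities for $\mathcal{B}_{ab}$ established in the proof of \Cref{WaveEquationsZeroQuantities}, namely $D_a\mathcal{B}_b{}^a\simeq\tfrac12 l^{ac}\mathcal{S}_{bac}+\mathcal{S}^{ac}D_cl_{ba}$ and its companion for $D_a\mathcal{B}_{bc}$. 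Closing this — deducing the vanishing of every transverse derivative of the subsidiary fields from the vanishing of the fields themselves — is the only genuinely delicate step; once it is in place, the uniqueness theorem for the homogeneous hyperbolic system does the rest.
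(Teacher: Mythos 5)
Your proposal is correct and takes essentially the same route as the paper, whose own (largely unwritten) proof is exactly this: combine \Cref{WaveEquationsKillingVector} and \Cref{WaveEquationsZeroQuantities}, note that $\mathcal{O}_{ab}\simeq 0$ renders the subsidiary system homogeneous, invoke uniqueness for the Cauchy problem on $(\mathscr{I},\ell_{ab})$ with data at $\partial\mathcal{S}$ to get $\mathcal{S}_{ab}\simeq 0$, and obtain necessity from the collapse of the $\mathcal{B}_{ab}$-equation to $0\simeq\mathcal{O}_{ab}$ once all zero-quantities vanish. If anything, you are more careful than the paper on the one delicate point --- that the transverse derivatives of $\mathcal{S}_{abc}$ and $\mathcal{B}_{ab}$ at the corner must be recovered from the divergence identities (in the spirit of \Cref{CondDerSab}) before the uniqueness theorem applies --- a step the paper leaves entirely implicit.
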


\begin{remark}
{\em We stress that the vanishing of the obstruction tensor $
\mathcal{O}_{ab}$ is a necessary and sufficient condition for
the existence of a Killing vector on the spacetime. The necessity
follows from the fact that if a Killing vector is present in the
spacetime then all the zero-quantities associated to the conformal
Killing vector evolution system will vanish. This, in turn, implies
that the zero-quantities intrinsic to the conformal boundary have to
vanish. The last of the wave equations in Lemma
\ref{WaveEquationsZeroQuantities} implies then that  $\mathcal{O}_{ab}
\simeq 0$. }
\end{remark}

\begin{remark}
{\em It should be stressed that the analysis carried out in the
  previous sections is conformally invariant. More precisely, if the
  unphysical Killing vector candidate is such that the zero-quantities
associated to the Killing equation conformal evolution system
vanish for a particular conformal representation, then it follows that
they will also vanish for any other conformal representation. This
follows from the conformal transformation properties for the
zero-quantities implied by the change of connection transformation
formulae. From this observation it follows also that the reduced
Killing boundary conditions \eqref{EtaBdy}-\eqref{DerDerEtaBdy} have
similar 
conformal invariance properties.}
\end{remark}

\section{Initial data at $\partial\mathcal{S}$}

As mentioned in \Cref{Corner}, the system \eqref{WEBdy1}-\eqref{WEBdy2} must be
complemented with data at $\partial\mathcal{S}$, that is to say, we have to
bring into consideration the conditions implied by the zero-quantities on
$\mathcal{S}_\star$ and make them consistent with the ones obtained from the boundary
analysis in the previous section. The main difference between this section and
the preceding ones is the introduction of an adapted system of coordinates
suited for studying the corner conditions.

\subsection{Set up}

For simplicity, let us introduce a system of coordinates $x^\mu = (x^0, x^1,
x^\mathcal{A})$ where $x^0$ and $x^1$ correspond to the time and radial coordinates,
respectively, while the caligraphic index $\mathcal{A}$ represents angular
coordinates. This system of coordinates is adapted to our problem in
the sense that
$\mathcal{S}_\star$ and $\mathscr{I}$ are given by
\begin{align*}
\mathcal{S}_\star = \{p\in\mathcal{M} \ | \ x^0 = 0 \} \quad \textrm{and} \quad 
\mathscr{I} = \{p\in\mathcal{M} \ | \ x^1 = 0 \}.
\end{align*}
The corner is determined then by the condition
$x^0 = x^1 = 0$.

\medskip

Let $h_{ab}$ be the intrinsic metric on $\mathcal{S}_\star$ and $t^a$ be its
the normal vector.  As the hypersurface $\mathcal{S}_\star$ is spacelike then
$t_a t^a = -1$. For convenience, let use the symbol $\hat{\phantom{X}}$ to
denote quantities defined on this hypersurface. 

Once coordinates have been introduced, the metrics can be written explicitly in
terms of the lapse and shift functions. Adopting a Gaussian gauge, the metrics
on $\mathcal{S}_\star$ and $\mathscr{I}$ take, respectively, the forms
\begin{subequations}
\begin{eqnarray}
&&\bmg |_{\mathcal{S}_\star} = -\mathbf{d} x^0\otimes \mathbf{d} x^0 + h_{\alpha\beta}\mathbf{d} x^\alpha
\otimes\mathbf{d} x^\beta,  \quad (\alpha, \beta = 1, 2, 3) \\
&& \bmg \simeq \mathbf{d} x^1\otimes \mathbf{d} x^1 + \ell_{\gamma\delta}\mathbf{d} x^\gamma \otimes \mathbf{d} x^\delta
\quad (\gamma, \delta = 0,2,3).
\end{eqnarray}
\end{subequations}
From here, we find that the non-zero components of the metric at the corner
$\partial\mathcal{S}$ are:
\[
g_{00} = \ell_{00} = -1, \quad g_{11} = h_{11} = 1, \quad g_{\mathcal{A}\mathcal{B}} =
h_{\mathcal{A}\mathcal{B}} = \ell_{\mathcal{A}\mathcal{B}}.
\]

\subsection{Corner conditions}

As noticed in \Cref{Corner}, the wave equations \eqref{WEBdy1} and
\eqref{WEBdy2} require suitable initial data at $\partial\mathcal{S}$.  These
are naturally provided by the conditions the initial data impose on $\eta, \
\zeta_a$ and their first derivatives along the conformal boundary. Here we
describe how such conditions can be obtained.

Let $\hat{\zeta}_a$ and $\hat{\zeta}$ denote, respectively, the pull-backs of
$h_a{}^b\xi_b$ and $t^a\xi_a$ into $\mathcal{S}_\star$.  Although this
decomposition with respect to $h_{ab}$ is clearly different from the one
performed on the conformal boundary we can observe that, when expressed in the
adapted coordinates $x^\mu$, the following identities hold at the corner:
\[
\hat{\zeta}_1 = \zeta = 0, \qquad \hat{\zeta} = \zeta_0, \qquad
\hat{\zeta}_\mathcal{A} = \zeta_\mathcal{A}.
\]
In this way, the angular components $\zeta_\mathcal{A}$ on $\mathscr{I}$ are
fixed by the initial data.  Similarly, if one requires the conformal factor
$\Xi$ to have continuous first derivatives, it follows then that the conditions
\[
\hat\partial_0 \Xi =\partial_0\Xi = 0, \quad \hat\partial_1\Xi = \partial_1\Xi = \Sigma, \quad
\hat\partial_\mathcal{A}\Xi = \partial_\mathcal{A}\Xi = 0.
\]
must be satisfied at $\partial\mathcal{S}$.

\medskip 

Regarding the remaining fields, values for $\eta$ and the
components of $\xi_a$ on $\mathcal{S}_\star$ can be found solving equations
\eqref{DecompPhi}--\eqref{DecompDerPsi} and
\eqref{DecompSab}--\eqref{DecompDBab} --the KID equations set-- with $\epsilon
=-1$.  Moreover, this system also provides with all their derivatives. In
particular, when the limit $\Xi \to 0$ is taken, the corresponding solutions
for $\eta, \ \hat{\zeta}_0$ and $\hat{\zeta}_\mathcal{A}$ along with their time
and angular derivatives serve as initial data at $\partial\mathcal{S}$ for wave
equations \eqref{BoxXi} and \eqref{BoxEta}.

\section{Conclusions} Once the conditions for the existence of a conformal
Killing vector on $\mathscr{I}$ have been established, we can link
\Cref{MainPropBoundary} to the initial-boundary problem in the spacetime via
Lemmas \ref{WEKillingSpacetime} and \ref{WEZQ}.  The main result of this work
can be formulated as follows:

\begin{theorem}
\label{MainTheorem}
Let $(\mathcal{M}, \bmg)$ a conformal extension of an anti de Sitter-like
spacetime with conformal boundary $\mathscr{I}$.  Let $\mathcal{S}_\star
\subset \mathcal{M}$ be a spacelike hypersurface intersecting $\mathscr{I}$ at
$\partial\mathcal{S}$. Let $\xi_{a\star}$ and $\eta_\star$ satisfy the
conformal KID equations \eqref{ConformalKID1} and \eqref{ConformalKID2} on
$\mathcal{S}_\star$.  Let $\zeta_a$ and $\eta$ be the fields obtained from
solving the wave equations \eqref{WEBdy1} and $\eqref{WEBdy2}$ with initial
data given by the restriction of $\xi_{a\star}$ and $\eta_\star$ to
$\partial\mathcal{S}$. Assume further that the obstruction tensor
$\mathcal{O}_{ab}$ constructed from $\ell_{ab}, \ \eta$ and $\zeta_a$ and
defined by equation \eqref{Obstruction} vanishes. Then the Killing vector
candidate $\xi_a$ obtained from solving equations \eqref{BoxXi} and
\eqref{BoxEta} with initial data $\xi_{a\star}, \ \eta_{\star}$ and boundary
data $\zeta_a , \ \eta$ pull-backs to a Killing vector $\tilde{\xi}_a$.
\end{theorem}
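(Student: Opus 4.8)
The plan is to assemble the final result by chaining together the three reductions already established in the paper, verifying at each stage that the hypotheses of the next lemma are met. The overall logic is that a Killing vector of the physical spacetime is equivalent (by the Lemma relating physical Killing vectors to unphysical conformal Killing vectors) to an unphysical conformal Killing vector, which in turn is produced by the trivial solution of the unphysical Killing equation propagation system of \Cref{WEZQ}. That trivial solution is forced whenever the propagation system is supplied with vanishing initial data on $\mathcal{S}_\star$ and vanishing Dirichlet data on $\mathscr{I}$. So the whole argument reduces to showing that, under the stated hypotheses, all the zero-quantities $S_{ab}, S_{abc}, \phi, \psi, B_{ab}$ do in fact vanish both on $\mathcal{S}_\star$ and on $\mathscr{I}$.

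First I would treat the initial hypersurface. The conformal KID equations \eqref{ConformalKID1}-\eqref{ConformalKID2} assumed for $\xi_{a\star}, \eta_\star$ are by definition the statement that all the zero-quantities and their normal derivatives vanish on $\mathcal{S}_\star$; by the remark following \eqref{VanishingBoundaryData} this is equivalent to the vanishing of the full covariant derivatives there. Next I would treat the conformal boundary. Here the work is to show that the fields $\zeta_a, \eta$ obtained from \eqref{WEBdy1}-\eqref{WEBdy2} actually satisfy the reduced Killing boundary equations \eqref{EtaBdy}-\eqref{DerDerEtaBdy}, since by \Cref{VanishingZQ} those are exactly equivalent to the vanishing of $\phi, \psi, \mathcal{S}_{ab}, D\mathcal{S}_{ab}, \mathcal{B}_{ab}$ on $\mathscr{I}$ (and hence, via \Cref{CondDerSab} and \Cref{TimelikeAssumptions}, of all the boundary zero-quantities). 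The crucial mechanism is \Cref{MainPropBoundary}: because $\mathcal{O}_{ab}\simeq 0$ is assumed, the homogeneous wave system of \Cref{WaveEquationsZeroQuantities} for $\mathcal{S}_{ab}, \mathcal{S}_{abc}, \mathcal{B}_{ab}$ has vanishing source, so if these quantities and their data vanish at the corner $\partial\mathcal{S}$ they vanish throughout $\mathscr{I}$; in particular $\zeta_a$ solves the intrinsic conformal Killing equation \eqref{KillingVectorBdy}, which together with \eqref{DerDerEtaBdy} gives precisely the needed boundary conditions.

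The step requiring the most care — and the place I expect the main obstacle — is the matching at the corner $\partial\mathcal{S}$. I must verify that the data fed into the boundary wave equations \eqref{WEBdy1}-\eqref{WEBdy2}, namely the restrictions of $\xi_{a\star}$ and $\eta_\star$, are consistent with the data required by the intrinsic zero-quantity system of \Cref{WaveEquationsZeroQuantities}, i.e. that $\mathcal{S}_{ab}, \mathcal{B}_{ab}, \mathcal{S}_{abc}$ all vanish at $\partial\mathcal{S}$. This is exactly the content of the corner analysis: using the adapted coordinates and the identities $\hat\zeta_1 = \zeta = 0$, $\hat\zeta = \zeta_0$, $\hat\zeta_\mathcal{A} = \zeta_\mathcal{A}$, one checks that the two different $3+1$ splittings (with respect to $t^a$ on $\mathcal{S}_\star$ and with respect to $n^a$ on $\mathscr{I}$) agree at the corner, so that the vanishing of the zero-quantities coming from the KID equations on $\mathcal{S}_\star$ transfers to the vanishing of their boundary counterparts at $\partial\mathcal{S}$. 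The delicate points are the compatibility of the normal-versus-intrinsic derivative bookkeeping across the corner and the regularity of $\Xi$ (continuity of first derivatives, giving $\partial_1\Xi = \Sigma$), which are needed to take the limit $\Xi\to 0$ cleanly.

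Finally, having established that every zero-quantity and its normal derivative vanish on $\mathcal{S}_\star$ and that every zero-quantity vanishes on $\mathscr{I}$, I would invoke the homogeneity of the propagation system \eqref{KillingEvolution1}-\eqref{KillingEvolution5} together with uniqueness for the initial-boundary value problem for linear wave equations (the references \cite{CheWah83,DafHru85}) to conclude that the trivial solution is the only solution, hence $S_{ab}\equiv 0$ throughout the development. By the definition of $S_{ab}$ and $\phi$ this means $\xi^a$ satisfies the unphysical Killing equations \eqref{UnphysicalKillingEquation1}-\eqref{UnphysicalKillingEquation2}, and by the Lemma in the conformal Killing section its pull-back $\tilde\xi_a$ is a genuine Killing vector of $(\tilde{\mathcal{M}}, \tilde g_{ab})$, completing the proof.
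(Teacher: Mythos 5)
Your proposal matches the paper's own treatment: the paper states \Cref{MainTheorem} without a separate detailed proof, obtaining it exactly as you do by chaining \Cref{WEKillingSpacetime} and \Cref{WEZQ} (homogeneous propagation system plus vanishing Dirichlet/initial data forces the trivial solution), \Cref{VanishingZQ} (equivalence of the vanishing boundary data with the reduced conditions \eqref{EtaBdy}--\eqref{DerDerEtaBdy}), and \Cref{MainPropBoundary} (vanishing obstruction tensor yields the intrinsic conformal Killing vector), together with the corner analysis. Your singling out of the corner compatibility as the delicate step coincides with the part the paper does discuss explicitly in its section on initial data at $\partial\mathcal{S}$, so the proposal is essentially the paper's argument made explicit.
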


\begin{remark}
{\em The obstruction tensor $\mathcal{O}_{ab}$  clearly vanishes for
  conformally flat boundaries. The question remains, however, 
  whether there exist other conformal classes of Lorentzian metrics
  with this property. Addressing this question may require expanding
  the obstruction tensor in a particular gauge with the aim of finding
  explicit solutions to this condition. This interesting question is,
  however, outside the scope of this article and will be pursued elsewhere.}
\end{remark}

\section*{Acknowledgements}
The authors thank the hospitality of the International Erwin Schr\"odinger
Institute for Mathematics and Physics where part of this work was carried out
as part of the research programme \emph{Geometry and Relativity} during
July-September 2017. DAC thanks support granted by CONACyT (480147). The
calculations in this article have been carried out in the suite {\tt xAct} for
abstract tensorial manipulations ---see \cite{xAct}.



\end{document}